\documentclass[aps,notitlepage,nofootinbib,superscriptaddress,twocolumn,longbibliography,10pt]{revtex4-2}

\usepackage[normalem]{ulem}
\usepackage{booktabs}

\newcommand{\sD}{{\mathsf D}}
\newcommand{\sU}{{\mathsf U}}
\newcommand{\sP}{{\mathsf P}}
\newcommand{\sG}{{\mathsf G}}

\newcommand{\CA}{\mathcal{A}}

\newcommand{\CC}{\mathcal{C}}

\newcommand{\CI}{\mathcal{I}}
\newcommand{\CH}{\mathcal{H}}

\newcommand{\BH}{\mathbb{H}}
\newcommand{\BF}{\mathbb{F}}
\newcommand{\tp}{\mathsf{F}_{\otimes}}
\newcommand{\chp}{\mathsf{F}_{\star}}

\newcommand{\beq}{\begin{eqnarray}}
\newcommand{\eeq}{\end{eqnarray}}
\newcommand{\beqq}{\begin{eqnarray*}}
\newcommand{\eeqq}{\end{eqnarray*}}

\newcommand{\be}{\begin{equation}}
\newcommand{\ee}{\end{equation}}
\newcommand{\barr}{\begin{array}}
\newcommand{\earr}{\end{array}}

\newcommand{\bra}[1]{{\langle #1|}}
\newcommand{\ket}[1]{{|#1 \rangle}}

\def\({\left(}
\def\){\right)}
\newcommand{\figref}[1]{Fig.~\ref{#1}}
\newcommand{\eqnref}[1]{Eq.~\eqref{#1}}
\newcommand{\secref}[1]{Sec.~\ref{#1}}
\newcommand{\appref}[1]{Appendix~\ref{#1}}

\def\parfig#1#2{
\parbox{#1\textwidth}
{\includegraphics[width=#1\textwidth]{#2}}
}

\usepackage{booktabs}

\usepackage{amsmath}
\usepackage{amsthm, amssymb}

\newtheorem{proposition}{Proposition}

\usepackage[colorlinks=true,citecolor=blue,linkcolor=blue]{hyperref}
\usepackage{graphicx}
\usepackage[ruled,vlined]{algorithm2e}
\usepackage{dcolumn}
\usepackage{bm}
\usepackage{color}
\usepackage{tabularx}
\usepackage{comment}
\usepackage{amsmath}
\usepackage{mathtools}
\usepackage{tikz-cd}
\usepackage{adjustbox}
\usepackage{dsfont}
\usepackage[english]{babel}
\makeatletter
\AtBeginDocument{%
  \@ifundefined{selectlanguage}{\newcommand{\selectlanguage}[1]{}}%
                                {\renewcommand{\selectlanguage}[1]{}}%
}
\makeatother
\usepackage[small]{caption}
\usepackage{subcaption}
\usepackage{multirow}
\begin{document}

\begin{titlepage}

\widetext

\title{Quantum Process Realization of LDPC Code Dualities and Product Constructions}

\author{Shuhan Zhang}
\author{Deepak Aryal}
\author{Yi-Zhuang You}
\affiliation{Department of Physics, University of California, San Diego, California 92093, USA}

%%%%%%%%%% Merge with supplemental materials %%%%%%%%%%
%%%%%%%%%% Prefix a "S" to all equations, figures, tables and reset the counter %%%%%%%%%%
\setcounter{equation}{0}
\setcounter{figure}{0}
\setcounter{table}{0}

\makeatletter
\renewcommand{\theequation}{S\arabic{equation}}
\renewcommand{\thefigure}{S\arabic{figure}}
\renewcommand{\thetable}{S\Roman{table}}
\renewcommand{\bibnumfmt}[1]{[S#1]}
\renewcommand{\citenumfont}[1]{S#1}

\date{\today}

\begin{abstract}
We realize a broad class of code constructions, including Kramers–Wannier duality, tensor product, and check product, as quantum processes consisting of ancilla initialization, local unitaries, and projective measurements. Using ZX-calculus, we represent these transformations diagrammatically and provide a systematic algorithm for extracting quantum circuits. Central to our framework is the observation that the physical content of a classical LDPC code is captured by the operator algebra associated with its Tanner graph, and that code transformations correspond to maps between such algebras. Kramers–Wannier duality then admits a natural interpretation as gauging, while tensor and check products correspond to coupled-layer constructions in which interlayer coupling and projection implement a quotient on stacked operator algebras. Together, these results establish a unified framework connecting code transformations, quantum circuits, and mappings between distinct quantum phases of matter.
\end{abstract}

\pacs{}

\maketitle

\vspace{2mm}

\end{titlepage} 
\section{Introduction and motivation}
A central theme in the study of quantum phases of matter is to understand how transformations between distinct phases can be realized as physical processes. An increasingly fruitful perspective on this problem has emerged from quantum error correction, where different quantum phases can be associated with distinct code spaces. In this context, low-density parity-check (LDPC) codes \cite{gallagerLowdensityParitycheckCodes1962,sipserExpanderCodes2006, breuckmannQuantumLowDensityParityCheck2021,panteleevAsymptoticallyGoodQuantum2022, leverrierQuantumTannerCodes2022} provide a powerful and tractable framework for modeling quantum many-body systems, particularly those with geometrically non-local interactions or defined on non-Euclidean geometries—settings that are often difficult to access using conventional condensed-matter approaches.

This connection between codes and phases has been highlighted in recent work \cite{rakovszky_physics_2023}, which demonstrates that families of LDPC codes can encode a wide variety of gapped quantum phases. In particular, gauging a classical LDPC (cLDPC) code was shown to correspond to a generalized Kramers–Wannier (KW) duality transformation \cite{kramersStatisticsTwoDimensionalFerromagnet1941,wegnerDualityGeneralizedIsing1971}, mapping a trivial product-state phase to one exhibiting spontaneous symmetry breaking (SSB) \cite{rakovszky_physics_2023}. More generally, gauging classical codes with local redundancies yields quantum LDPC (qLDPC) codes associated with topologically ordered phases \cite{rakovszky_physics_2024, kitaevFaulttolerantQuantumComputation1997,levinStringnetCondensationPhysical2005}. These results illustrate that gauging transformations act as operational bridges between distinct quantum phases of matter.

Motivated by this perspective, it is natural to ask how gauging transformations can be physically realized. Recent works have shown that gauging can be implemented as explicit quantum processes involving ancilla qubits, local unitary circuits, and projective measurements. In particular, quantum process realizations of untwisted gauging in (1+1)-D have been demonstrated \cite{okada_non-invertible_2024,khan_quantum_2024,lu_realizing_2024}, with subsequent generalizations to higher dimensions and twisted gauging \cite{mana_kennedy-tasaki_2024,lu_realizing_2024}. These constructions enable explicit interpolations between distinct gapped quantum phases and provide concrete protocols for quantum state preparation \cite{tantivasadakarnHierarchyTopologicalOrder2023,tantivasadakarnShortestRouteNonAbelian2023,lootensLowdepthUnitaryQuantum2025,huPreparingCodeStates2025}. Notably, when applied to classical codes with local redundancies, these constructions prepare ground states of the corresponding gauged quantum codes, which can exhibit topological order and highly nonlocal entanglement structure. Beyond state preparation, such explicit quantum processes provide physical realizations of non-invertible symmetries \cite{seibergNoninvertibleSymmetriesLSMtype2024,shaoWhatsDoneCannot2024,chenAnalogsDeconfinedQuantum2025}, enabling the study of their fusion rules, anomalies, and connections to symmetry-protected topological (SPT) phases \cite{choiNoninvertibleDualityDefects2022,kaidiKramersWannierlikeDualityDefects2022,gorantla_tensor_2024}.

Moreover, LDPC codes admit a variety of product constructions, including tensor product, check product, and cubic product, which generate codes with controlled local redundancies. When combined with gauging, these constructions give rise to a rich landscape of quantum phases, including topological and fractonic phases \cite{vijay_fracton_2016,shirleyFoliatedFractonOrder2019,haahLocalStabilizerCodes2011,tanFractonModelsProduct2025, haahLocalStabilizerCodes2011}. Understanding how these product constructions can themselves be realized as quantum processes is therefore essential for developing a unified physical picture of code-based phase constructions. 

In this work, we develop a unified framework for realizing a broad class of LDPC code transformations as explicit quantum processes. At a technical level, we use the ZX-calculus \cite{coeckePicturingQuantumProcesses2017, duncanGraphtheoreticSimplificationQuantum2020, wetering_zx-calculus_2020,jeandelCompleteAxiomatisationZXCalculus2018} to represent KW duality and product constructions directly as ZX-diagrams. From these diagrams, we provide a systematic algorithm to extract quantum processes consisting of ancilla initialization, local Clifford unitaries, and projective measurements. This yields explicit circuit-level realizations of KW duality, tensor product, check product, and their generalizations, as summarized in Table~\ref{tab:dualities} and~\ref{tab:products}. For KW duality, this framework further reveals a direct correspondence between circuit resources—specifically, the number of ancilla qubits and projective measurements—and the intrinsic structure of the code, namely its redundancies and symmetries.

Underlying our framework is the observation that the physical content of a cLDPC code is captured by the operator algebra associated with its Tanner graph, and that code transformations are most naturally understood as maps between such operator algebras. The Hamiltonians and phase structures then follow as consequences of this algebraic structure, rather than serving as independent starting points. From this perspective, KW duality admits a natural interpretation as a gauging transformation \cite{fradkinPhaseDiagramsLattice1979,wegnerDualityGeneralizedIsing1971,kogutIntroductionLatticeGauge1979}, in which ancilla qubits and unitary couplings correspond to introducing dynamical gauge fields or symmetry defects, while projective measurements enforce Gauss law constraints and project onto the gauge-invariant sector. Tensor and check product constructions, on the other hand, correspond to coupled-layer constructions \cite{ma_fracton_2017}, where unitary operations implement inter-layer couplings between stacked operator algebras, and measurements effectively realize the strong-coupling limit that projects onto the product code's algebra. This perspective clarifies the distinct physical roles played by ancilla qubits, unitary couplings, and measurements across these constructions.

The remainder of the paper is organized as follows. In \secref{sec:cLDPC}, we review classical and quantum LDPC codes and introduce the operator algebra framework. In \secref{sec:Duality}, we formulate KW duality using ZX-diagrams, present a systematic circuit extraction procedure, and relate different realizations to the minimal-coupling and defect-condensation pictures of gauging. In \secref{sec:products}, we extend this framework to tensor and check product codes, interpreting them as coupled-layer models. Finally, in \secref{sec:pq-prod}, we introduce the general $(p,q)$-product, which unifies tensor and check products for more complicated product constructions.
\begin{table*}[!tb]
    \centering
    \caption{Classical and quantum LDPC codes: definitions and operator realizations.}
    \label{tab:code_realizations}
    \begin{tabular}{l c c l l}
        \toprule
        & \textbf{Chain Complex} & \textbf{Parity-Check Matrix} & \textbf{Bit operators} & \textbf{Check operators} \\
        \midrule
        \textbf{Classical code}
        & $C_1 \xrightarrow{\;\delta\;} C_0$
        & $\mathbf{\BH}$
        & $\forall i:\; \sigma_i^x$
        & $\forall a:\; \prod_{i\in \delta(a)} \sigma_i^z$ \\[4pt]
        \multirow{2}{*}{\textbf{Quantum code}}
        & \multirow{2}{*}{$C_2 \xrightarrow{\;\delta_x\;} C_1 \xrightarrow{\;\delta_z^\top\;} C_0$}
        & $\BH_z$
        & \multirow{2}{*}{$\forall i:\; \sigma_i^x,\; \sigma_i^z$}
        & $\forall a_z:\; \prod_{i\in \delta_z(a_z)} \sigma_i^z$ \\[2pt]
        & & $\BH_x$ & & $\forall a_x:\; \prod_{i\in \delta_x(a_x)} \sigma_i^x$ \\
        \bottomrule
    \end{tabular}
\end{table*}

\begin{table*}[t]
    \centering
    \caption{Dualities as unary morphisms between cLDPC codes $\mathcal{C} \to \mathcal{C}'$.}
    \label{tab:dualities}
    \renewcommand{\arraystretch}{1.6}
    \begin{tabular}{c c c c c c}
        \toprule
        \textbf{Operation} & \textbf{Code} & \textbf{Parity-Check Matrix} & \textbf{Operator } & \textbf{ZX Diagram} & \textbf{Circuit} \\
        \midrule
        \textbf{Kramers-Wannier} & $\mathcal{C} \to \mathcal{C}^\top$ & $\BH \to \BH^\top$ & $\mathsf{D}$ & \raisebox{-0.4\height}{\includegraphics[height=2.5cm]{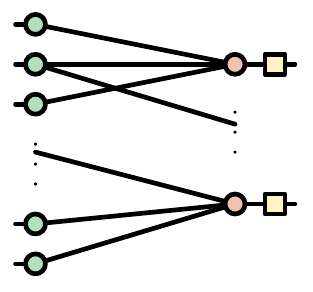}} & \raisebox{-0.4\height}{\includegraphics[height=2.5cm]{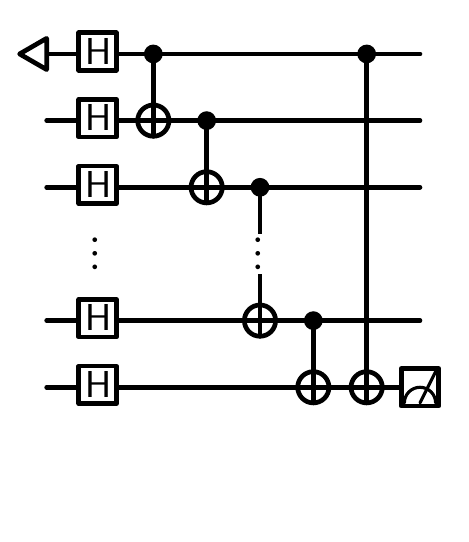}} \\
        \textbf{Orthogonal Complement } & $\mathcal{C} \to \mathcal{C}^\perp$ & $\BH \to \BH^\perp$ & $\mathsf{G}$ & \raisebox{-0.4\height}{\includegraphics[height=2.0cm]{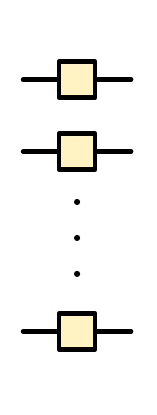}} & \raisebox{-0.4\height}{\includegraphics[height=2.0cm]{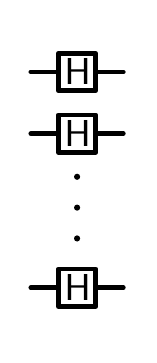}} \\
        \bottomrule
    \end{tabular}
\end{table*}

\begin{table*}[!tb]
    \captionof{table}{Product constructions as binary morphisms between cLDPC codes $(\mathcal{C}_1, \mathcal{C}_2) \to \mathcal{C}_3$.}
    \label{tab:products}
    \renewcommand{\arraystretch}{1.6}
    \begin{tabular}{ c c c c c c }
        \toprule
        \textbf{Operation} & \textbf{Code} & \textbf{Parity-Check Matrix} & \textbf{Operator} & \textbf{ZX Diagram} & \textbf{Circuit} \\
        \midrule
        \textbf{Tensor Product} & $(\mathcal{C}_1, \mathcal{C}_2) \to \mathcal{C}_1 \otimes \mathcal{C}_2$ & $\BH_{\mathcal{C}_1}, \BH_{\mathcal{C}_2} \to \begin{bmatrix} \mathbb{I}_{n_2}\otimes \BH_{\mathcal{C}_1} \\ \BH_{\mathcal{C}_2}\otimes \mathbb{I}_{n_1} \end{bmatrix}$ & $\mathsf{F}_\otimes$ & \raisebox{-0.4\height}{\includegraphics[height=2.5cm]{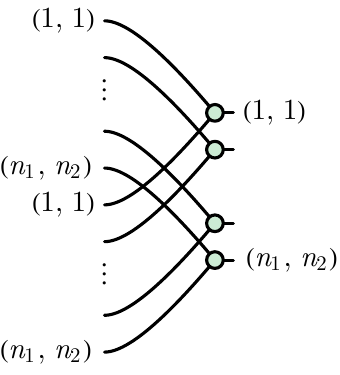}}  & \raisebox{-0.4\height}{\includegraphics[height=2.5cm]{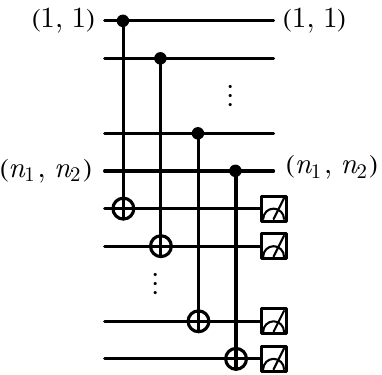}} \\
        \textbf{Check Product} & $(\mathcal{C}_1, \mathcal{C}_2) \to \mathcal{C}_1 * \mathcal{C}_2$ & $\BH_{\mathcal{C}_1}, \BH_{\mathcal{C}_2} \to \begin{bmatrix} \BH_{\mathcal{C}_1} \otimes \BH_{\mathcal{C}_2} \end{bmatrix}$ & $\mathsf{F}_*$ & \raisebox{-0.4\height}{\includegraphics[height=2.5cm]{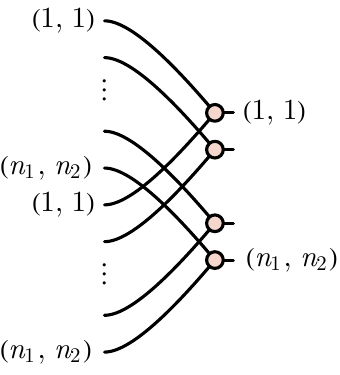}} & \raisebox{-0.4\height}{\includegraphics[height=2.5cm]{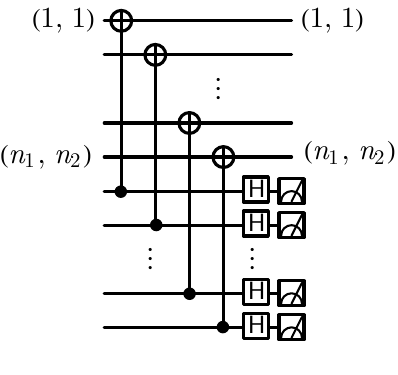}} \\
        \bottomrule
    \end{tabular}
\end{table*}

\section{Classical and quantum LDPC codes \label{sec:cLDPC}}
Table~\ref{tab:code_realizations} summarizes the correspondence between chain complexes \cite{bombinHomologicalErrorCorrection2007}, parity-check matrices, and operator realizations for classical and quantum LDPC codes.

\subsection{Definitions}
A cLDPC code $\mathcal{C}$ can be defined by a length-one chain complex of finite-dimensional binary vector spaces:
\begin{align}
    C_1 \xrightarrow{\;\delta\;} C_0 ,
\end{align}
where $C_0 = \mathbb{F}_2^{n}$ and $C_1 = \mathbb{F}_2^m$ are binary vector spaces of dimension $n$ and $m$, corresponding respectively to \emph{bits} and \emph{checks}. With a choice of bases for $C_0$ and $C_1$, the transpose map $\delta^\top: C_0 \to C_1$ is represented by an $m\times n$ binary matrix $\BH$, known as the \emph{parity-check matrix}. The $\emph{code subspace}$ is defined as the kernel $\ker(\BH)$, consisting of all bit strings satisfying every parity check.

The parity-check matrix $\BH$ can also be viewed as a biadjacency matrix that defines a bipartite graph called the \emph{Tanner graph} $G = (V_b \cup V_c, E)$, where $V_b$ and $V_c$ denote the sets of $n$ bit nodes and $m$ check nodes, respectively. The edge set $E=\{(i,a) \in V_b \times V_c : \BH_{a,i}=1\}$ connects each bit $i$ to the check $a$ that acts on it.
We will use $\delta(a)=\{i\in V_b : \BH_{a,i}=1\}$ to denote the support of check $a$ in $C_0$, i.e., the set of bits connected to $a$ in the Tanner graph, and similarly $\delta^\top(i)=\{a\in V_c : \BH_{a,i}=1\}$ for the set of checks acting on bit $i$.

We embed the cLDPC code $\mathcal{C}$ into a quantum Hilbert space $\mathcal{H}_\sigma = \mathbb{C}_2^{\otimes n}$ by promoting each classical bit to a qubit. 
For each check $a$, we define an associated check operator, with $\sigma_i^z=\ket{0}\bra{0}_i-\ket{1}\bra{1}_i$ being the Pauli $Z$-operator acting on qubit $i$,
\begin{align}
    C_a=\prod_{i \in \delta(a)} \sigma_i^z,
\end{align}
such that the quantum code subspace is the common eigenspace of $C_a=1$ for all $a$. $\ket{\mathbf{0}}:=\prod_{i} \ket{0}_i$ is always a state in the code subspace. For any codeword $\lambda \in\ker(\delta^\top)$, we define an associated logical operator, with $\sigma_i^x=\ket{0}\bra{1}_i+\ket{1}\bra{0}_i$ being the Pauli $X$-operator acting on qubit $i$,
\begin{align}
    L_{\lambda} = \prod_{i\in \lambda} \sigma_i^x,
\end{align}
which maps $\ket{\mathbf{0}}$ to the corresponding codeword state $\ket{\lambda}=L_{\lambda}\ket{\mathbf{0}}$. The quantum code subspace $\mathrm{span}\{\ket{\lambda}\}$ is spanned by these codewords states.

On the other hand, the kernel of the map $\delta$ defines the space of \textit{redundancies} of the code. For any redundancy $R \in\ker(\delta)$, the corresponding product of check operators is trivial:
\begin{align}
    \prod_{a\in R} C_a = \openone .
\end{align}

A qLDPC code \cite{breuckmannQuantumLowDensityParityCheck2021,calderbankGoodQuantumErrorCorrecting1996,steaneErrorCorrectingCodes1996} is defined by a length-two chain complex of binary vector spaces
\begin{align}
    C_2 \xrightarrow{\delta_2=\delta_x} C_1 \xrightarrow{\delta_1=\delta_z^\top} C_0 ,
\end{align}
satisfying $\delta_1\circ\delta_2=0$. Physically, $C_1$ corresponds to qubits, and $C_0$ and $C_2$ correspond to $Z$-type and $X$-type checks, respectively. The condition $\delta_1\circ\delta_2=0$ ensures that all $Z$- and $X$-type stabilizers commute, so that $H_q$ defines a valid quantum stabilizer code.

\subsection{Operator algebra and Hamiltonian for cLDPC codes\label{sec:op-alg}}

The physical content of a cLDPC code $\CC$ is most naturally captured not by a specific Hamiltonian, but by the \emph{operator algebra} associated with its Tanner graph. Given a cLDPC code with Tanner graph $G = (V_b \cup V_c, E)$, we associate a qubit to each bit node. 

Each node of the Tanner graph carries a single generating operator: $\sigma_i^x$ for bit nodes and $C_a = \prod_{i \in \delta(a)} \sigma_i^z$ for check nodes. We define the operator algebra of the code, $\CA_\CC$, as the subalgebra of $\mathrm{End}(\CH_\sigma)$ generated by the single-site operators $\{\sigma_i^x\}$ and the check operators $\{C_a\}$
\begin{align}
   \CA_\CC = \left\langle \sigma_i^x \;\big|\; i \in V_b \;;\; C_a \;\big|\; a \in V_c \right\rangle.
\end{align}

The most general Hamiltonian linear in the generators  of $\CA_{\CC}$ takes the form
\begin{align}\label{eq:cH}
    H_\CC = -J \sum_{a \in V_c} C_a - h \sum_{i \in V_b} \sigma_i^x,
\end{align}
where $J$ and $h$ are coupling constants associated with check and bit nodes, respectively. Tuning the ratio $h/J$ interpolates between the code-constrained phase at $h \ll J$, whose ground-state subspace coincides with the quantum code subspace, and a trivial product-state phase at $h \gg J$.

This formulation emphasizes that the Hamiltonian is entirely determined by the Tanner graph structure: the code constraints define the $Z$-type terms, while the bit-node operators provide the conjugate $X$-type terms. In particular, the transverse-field term $\sum_i \sigma_i^x$ is not an external deformation but an intrinsic part of the operator algebra. 

More generally, transformations between codes—such as KW duality and product constructions—can be understood as maps between operator algebras, from which the corresponding Hamiltonian mappings follow as a consequence. This perspective will be central to the physical interpretation of the constructions in Secs.~\ref{sec:Duality} and~\ref{sec:products}.

\section{Dualities and their quantum process realizations \label{sec:Duality}}
% Given a classical code $\mathcal{C}$ defined by a linear map $\delta : C_0\to C_1$,  the transpose code $\mathcal{C}^\top$ is defined by the linear map $\delta^\top: C_1 \to C_0$. Let us note that to define $\delta^\top$ we have to specify a basis for $C_0$ and $C_1$. The parity check matrix of transpose code is $\BH^\top$. So, the Tanner graph of $\delta^\top$ has the roles of qubit and checks switched in comparison to the Tanner graph of $\delta$. The Hamiltonian of the transpose code is given as:
% \begin{align*}
%     H_{\delta^\top} = - \sum_{i} \prod_{a\in \delta(i)} \tau^z_i - \sum_a \tau^x_a
% \end{align*}
% \par
The generalized KW duality exchanges bits and checks of a classical code $\CC$. By embedding $\CC$ and its KW dual code $\CC^{\top}$ into the quantum Hilbert spaces $\mathcal{H}_\sigma$ and $\mathcal{H}_\tau$, respectively, the KW transformation can be described by a non-invertible operator
\[
\sD : \mathcal{H}_\sigma \rightarrow \mathcal{H}_\tau
\]
that maps:
\begin{align}
    \sD \;\sigma_i^x &= \prod_{a\in\delta^\top(i)}\tau^z_a\; \sD , \label{eq:D1} \\
    \mathsf{D}\; \prod_{i \in \delta(a)} \sigma_i^z &= \tau_a^x \;\mathsf{D} . \label{eq:D2}
\end{align}
\par
KW duality was understood as gauging a classical LDPC code in the work of Rakovszky and Khemani \cite{rakovszky_physics_2023}. In this section, we demonstrate that this abstract gauging transformation can be made explicit as a quantum process: the coupling to a gauging field is implemented by attaching ancilla qubits and applying unitary operations, while the imposition of Gauss law is enforced through projective measurements. 
%: the coupling to a gauge field is implemented by attaching ancilla qubits and applying local unitary circuits, while the imposition of Gauss law is enforced through projective measurements.
We first formalize the implementation of $\sD$ within the ZX-diagram framework, then we describe an algorithm to extract a quantum process from the ZX diagram and discuss the connection to gauging. 

Another duality that exchanges the logicals and checks of a cLDPC code is the orthogonal-complement duality. By embedding $\CC$ and its orthogonal-complement dual code $\CC^{\perp}$ into the quantum Hilbert spaces $\mathcal{H}_\sigma$ and $\mathcal{H}_\gamma$, respectively, the orthogonal-complement dual transformation can be described by the operator
\[
\sG : \mathcal{H}_\sigma \rightarrow \mathcal{H}_\gamma
\]
that maps:
\begin{align}
    \sG \prod_{i\in \lambda}\sigma_i^x &= \prod_{i\in \lambda}\gamma_i^z \sG , \\
    \sG \prod_{i \in \delta(a)} \sigma_i^z &= \prod_{i \in \delta(a)} \gamma_i^x \sG .
\end{align}
It is straightforward to verify that $\sG$ can be implemented as a tensor product of Hadamard gates on all bits, $\sG=H^{\otimes n}$. In contrast to the KW duality operator $\sD$, the orthogonal-complement duality $\sG$ is invertible, reflecting the fact that it maps between Hilbert spaces of equal dimension without any projection. This distinction makes the quantum process realization of KW duality considerably richer: since
$\sD$ is non-invertible, its implementation necessarily involves ancilla qubits and projective measurements, whose structure we now make precise using the ZX-diagram framework.
\subsection{ZX diagram representation}
The ZX-calculus \cite{coecke_interacting_2011,wetering_zx-calculus_2020,luStrangeCorrelatorString2025} is a graphical language for representing quantum processes as tensor networks built from two types of vertices: Z-spiders and X-spiders, which correspond to certain families of unitary and projection operations. Its graphical rewrite rules preserve the underlying linear maps and enable rigorous derivations of circuit identities. As we show below, the ZX formalism is particularly well suited for our purposes: it provides a natural representation of the KW duality operator 
$\sD$ from which the required ancilla, unitaries, and measurements can be systematically extracted.

\begin{figure}[htbp]
    \centering
    \includegraphics[width=1\linewidth]{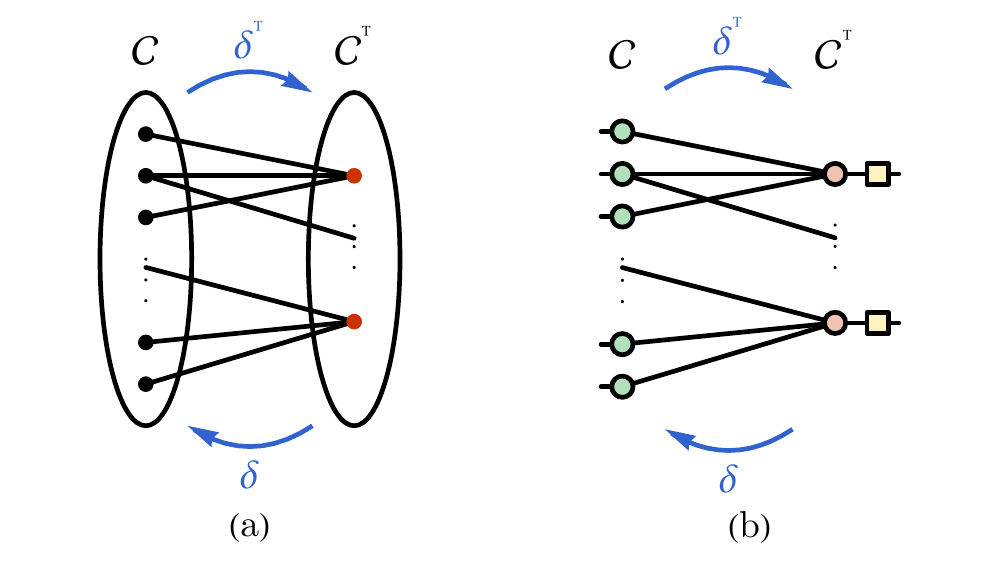}
    \caption{(a) Tanner graph representation of a cLDPC code. (b) ZX-diagram representation of KW duality operator $\sD$.}
    \label{fig:tanner}
\end{figure}

For general cLDPC codes, we show that the KW duality operator $\sD$ can be represented using the ZX-diagram constructed from the Tanner graph of the code. Our construction replaces the bit nodes with Z-spiders and check nodes with X-spiders, as shown in \figref{fig:tanner}. This correspondence stems from the fact that the Tanner graph encodes the bit-check incidence structure through the  maps $\delta$ and $\delta^\top$, while the commutation and fusion rules of the ZX-calculus for Z- and X-spiders naturally implement these maps. The Hadamard nodes account for the change in basis between Pauli Z and X operators. We can also explicitly verify that this ZX-diagram derived from the Tanner graph provides a diagrammatic representation of $\sD$ that implements the maps given in \eqnref{eq:D1} and \eqref{eq:D2}. An explicit example for the 1D Ising model is demonstrated in \appref{app:I1D}.

\subsection{ZX diagram to quantum process 
\label{sec:circuit-extraction}}
Given a ZX diagram obtained from the Tanner graph, as in \figref{fig:tanner}(b), we now describe a concrete procedure for constructing the quantum process consisting of ancilla attachment, local unitary
operations, and projective measurements. The circuit is extracted by performing Gaussian elimination on the parity-check matrix: each row operation corresponds to a \texttt{CNOT} gate in the resulting circuit.

Our construction follows Algorithm \ref{alg:process-extraction} adapted from Backens et al.~\cite{backens_there_2021} for extracting quantum circuits from ZX diagrams. The key idea is to perform
Gaussian elimination on the biadjacency matrix associated with the underlying bipartite graph of the ZX diagram. As illustrated in \figref{fig:cnot_extraction}, each elementary row operation during Gaussian elimination corresponds to the extraction of a \texttt{CNOT} gate in the resulting circuit. Here we adopt a simplified notation for visual clarity:
    \parfig{0.15}{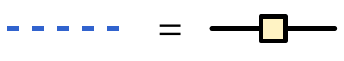}.

\begin{figure}[ht]
    \centering
    \includegraphics[width=0.7\linewidth]{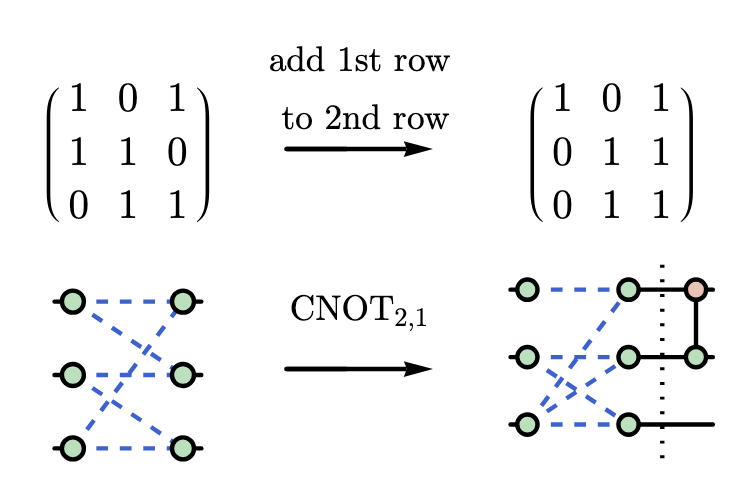}
    \caption{Basic move in circuit extraction: for row operation $R_2 \mapsto R_1 +R_2$ of the biadjacency matrix of the ZX diagram, we extract a \texttt{CNOT(control = $q_2$, target = $q_1$)} gate.}
    \label{fig:cnot_extraction}
\end{figure}

\begin{algorithm}[ht]
\caption{Extract Quantum Process from $\BH$}
\label{alg:process-extraction}
\begin{flushleft}
\textbf{Input:} Parity-check matrix $\BH \in \BF_2^{m \times n}$\\
\textbf{Output:} Quantum process realization of $\sD$
\end{flushleft}
\vspace{-0.5em}
\begin{enumerate}
    \item Perform Gaussian elimination on $\BH$:
    \begin{enumerate}
        \item For each row operation $R_i \leftarrow R_i + R_j$:
        \begin{itemize}
            \item Emit \texttt{CNOT(control = $q_i$, target = $q_j$)} on the output side
            \item Update the ZX-diagram connectivity to reflect the new row-reduced matrix $\BH'$
        \end{itemize}
    \end{enumerate}
    \item Perform Gaussian elimination on $\BH'^\top$:
    \begin{enumerate}
        \item For each row operation $R_i \leftarrow R_i + R_j$:
        \begin{itemize}
            \item Emit \texttt{CNOT(control = $q_i$, target = $q_j$)} on the input side
            \item Update the ZX-diagram connectivity to reflect the new row-reduced matrix
        \end{itemize}
    \end{enumerate}
\end{enumerate}
\end{algorithm}
 
An explicit example applying this procedure to the 1D Ising model is presented in \appref{app:I1D-extract}. Upon completion of the algorithm, the original ZX diagram is converted into a quantum process composed of three elementary ingredients: ancilla preparation, \texttt{CNOT} gates, and projective measurements. The resulting circuit realization is not unique, as it depends on the specific sequence of row operations chosen during Gaussian elimination.

We now show that any quantum process realizing $\sD$ necessarily requires at least $k^\top$ ancilla qubits and $k$ projective measurements, where $k$ and $k^\top$ are the dimensions of the kernels of $\BH$ and $\BH^\top$, respectively.
\begin{proposition}[Minimal resources for quantum process realizing $\sD$]
Let $\mathbb{H} \in \mathbb{F}_2^{m \times n}$ be the parity-check matrix of a cLDPC code $\mathcal{C}$, and let $\sD$ denote the generalized KW duality map. Then the minimal number of ancilla qubits and projective measurements required to realize $\sD$ as a quantum process are $k^\top = m - \operatorname{rank}(\BH)$ and $k = n - \operatorname{rank}(\BH^\top)$ respectively (the number of redundancies and symmetries of $\CC$).
\end{proposition}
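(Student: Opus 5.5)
The plan is to prove two things: a lower bound from a rank argument on $\sD$, and matching achievability from Algorithm~\ref{alg:process-extraction}.

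\emph{Rank and kernel of $\sD$.} We first determine the linear map $\sD:\mathcal{H}_\sigma\to\mathcal{H}_\tau$ from the ZX diagram of \figref{fig:tanner}(b). Up to normalization it is
\[
\sD\ket{s}=\sum_{t\in\BF_2^m}(-1)^{t\cdot \BH s}\ket{t}.
\]
Equivalently, $\sD = H^{\otimes m}\, \mathsf{P}_{\BH}$, where $\mathsf{P}_{\BH}:\ket{s}\mapsto\ket{\BH s}$ is the (generally non-injective, non-surjective) linear map induced by $\BH$. One can check directly that this satisfies Eqs.~\eqref{eq:D1}--\eqref{eq:D2}.

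From this form:
\begin{itemize}
\item $\sD\ket{s}=\sD\ket{s'}$ iff $s+s'\in\ker\BH$. Hence $\ker\sD$ is spanned by the differences $\ket{s}-\ket{s+\lambda}$, and $\dim\ker \sD = 2^n-2^{\operatorname{rank}\BH}$.
\item The image of $\sD$ is the Hadamard-rotated span of $\ket{\operatorname{im}\BH}$. Its orthogonal complement is nontrivial whenever $\ker\BH^\top\neq 0$.
\end{itemize}
So $\sD$ factors through a space of dimension $2^{\operatorname{rank}\BH}$. It loses a factor $2^{k}$ on the input side (one qubit per symmetry generator) and fails to fill a factor $2^{k^\top}$ on the output side (one qubit per redundancy).

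\emph{Lower bound.} We model a quantum process with $a$ ancillas and $p$ postselected (projective) single-qubit measurements as
\[
\Pi\, U\,(\openone\otimes\ket{0}^{\otimes a}),
\]
with qubit counts $n+a=m+p$. The map $V=U(\openone\otimes\ket{0}^{\otimes a})$ is an isometry, so after the projection $\Pi$ the rank is at most $2^{n}$ and at most $2^{m}$. The bounds follow from two complementary counts.
\begin{itemize}
\item \emph{Measurements.} An isometry followed by projecting $p$ qubits onto fixed outcomes can shrink the rank by at most a factor $2^p$. This holds because the $2^n$-dimensional image of $V$ sits inside $\mathcal{H}_\tau\otimes\mathbb{C}_2^{\otimes p}$, and projecting to one branch kills at most a subspace of codimension $2^{n}-2^{n-p}$. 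More carefully, $\operatorname{rank}(\Pi V)\ge 2^{n}/2^{p}$ by a dimension count. Since $\operatorname{rank}\sD=2^{n-k}$, we get $p\ge k$.
\item \emph{Ancillas.} Apply the same argument to the adjoint $\sD^\dagger$, which is the process run backwards: ancillas become postselections and vice versa. Since $\operatorname{rank}\sD^\dagger=2^{m-k^\top}$, we get $a\ge k^\top$.
\end{itemize}
These counts are consistent with the qubit balance, since $n+k^\top=m+k$ as a consequence of $\operatorname{rank}\BH=\operatorname{rank}\BH^\top$.

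\emph{Achievability.} Run Algorithm~\ref{alg:process-extraction}. Gaussian elimination brings $\BH$, via CNOT circuits on both sides, to the normal form
\[
\begin{pmatrix}\openone_r&0\\0&0\end{pmatrix},\qquad r=\operatorname{rank}\BH.
\]
In ZX language this normal form reads as follows:
\begin{itemize}
\item the $r$ matched bit/check pairs become Hadamard wires;
\item the $k$ zero columns become bare Z-spiders with a single leg, i.e.\ $\bra{+}$ projections, giving the measurements;
\item the $k^\top$ zero rows become bare X-spiders attached through Hadamards, i.e.\ $\ket{0}$ or $\ket{+}$ ancilla preparations.
\end{itemize}
Hence exactly $k^\top$ ancillas and $k$ measurements are used, saturating the bound.

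\emph{Main obstacle.} The step requiring most care is the lower-bound rank inequality. One must justify modeling a general process by postselection onto fixed outcomes. One must also show that a single projective measurement can at most halve the rank of an injective-on-image map; this needs care in how the rank is counted when composing with the isometry. I would phrase it as $\operatorname{rank}(\bra{0}_q W)\ge\operatorname{rank}(W)/2$ for $W$ with orthonormal-column structure, and iterate this inequality $p$ times.
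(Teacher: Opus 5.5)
\paragraph{The lower bound has a genuine gap.} Your achievability half is fine and is essentially the paper's argument. The paper's proof in fact only counts the dangling legs left by Gaussian elimination in Algorithm~\ref{alg:process-extraction}; it never compares against arbitrary processes.

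The problem is the key inequality $\operatorname{rank}(\Pi V)\ge 2^{n-p}$, which is false once $a\ge 1$. Your dimension count only gives $\dim(\operatorname{im}V\cap\ker\Pi)\le(2^p-1)2^m=2^{n+a}-2^{n+a-p}$, and this is vacuous for $a\ge 1$. Likewise, the lemma $\operatorname{rank}(\bra{0}_q W)\ge\operatorname{rank}(W)/2$ holds for unitary $W$ but not for isometries into a larger space. For example, take the isometry with $W\ket{++}=\ket{0}_q\ket{++}$ and $W\ket{\phi}=\ket{1}_q\ket{\phi}$ for $\ket{\phi}\perp\ket{++}$. A single projection $\bra{0}_q$ then drops the rank from $4$ to $1$. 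The adjoint step for ancillas inherits the same flaw.

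This cannot be patched within your model, because with arbitrary unitaries the proposition itself is false. Since $\sD^\dagger\sD\propto\sum_{\lambda\in\ker\BH}L_\lambda$ is proportional to a projector, $T=\sD/\|\sD\|$ is a partial isometry. For $n=m$, the matrix $U=\begin{pmatrix}T&\openone-TT^\dagger\\ \openone-T^\dagger T&-T^\dagger\end{pmatrix}$ is unitary and satisfies $(\bra{0}\otimes\openone)U(\ket{0}\otimes\openone)=T$. So one ancilla and one postselected measurement realize $\sD$. This contradicts the claimed bound for, e.g., two decoupled periodic Ising chains or the plaquette Ising model, where $k=k^\top\ge 2$.

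\paragraph{What is missing.} You need to restrict the class of processes to stabilizer ones: Clifford unitaries with Pauli-basis ancillas and measurements. This is the implicit setting of the paper's CNOT/Hadamard circuits and ZX diagrams.

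You also need the correct replacement lemma. A single-qubit Pauli projection with nonzero amplitude lowers the entanglement of a stabilizer state across any bipartition by at most one bit. The reason is that the reduced state on the far side is either unchanged or an equal mixture of two Pauli-conjugate post-measurement states.

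Apply this to the Choi state of $V=U(\openone\otimes\ket{0}^{\otimes a})$. It is a stabilizer state with Schmidt rank $2^n$ across the input/output cut. After the $p$ projections, $\log_2\operatorname{rank}(\Pi V)\ge n-p$. Since $\operatorname{rank}\sD=2^{n-k}$, this forces $p\ge k$. Then $n+a=m+p$ together with $n+k^\top=m+k$ gives $a\ge k^\top$.

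The block-encoding unitary above is necessarily non-Clifford, which is consistent with this bound. Your rank-counting strategy is therefore the right instinct, but it only works once the model is restricted.
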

\noindent
\textit{Proof:} After step~1 of Algorithm~\ref{alg:process-extraction}, Gaussian elimination on $\mathbb{H}$ produces
\[
m-\operatorname{rank}(\mathbb{H})
= m-(n-\dim\ker(\mathbb{H}))
= m-(n-k)
= k^\top
\]
zero rows in the biadjacency matrix. Each such zero row corresponds to a dangling output leg in the ZX diagram and therefore requires the attachment of an ancilla qubit. Hence at least $k^\top$ ancilla qubits are necessary.

Similarly, after step~2, Gaussian elimination on the transposed matrix $\mathbb{H}^\top$ produces
\[
n-\operatorname{rank}(\mathbb{H}^\top)
= n-(m-\dim\ker(\mathbb{H}^\top))
= n-(m-k^\top)
= k
\]
zero rows, which correspond to dangling input legs and thus require $k$ projective measurements. This establishes the lower bound.

Since the circuit extracted by Algorithm~\ref{alg:process-extraction} achieves
these numbers explicitly, the bounds are tight.
\hfill$\square$

We can also prove the following proposition which will be useful in the gauging construction of Sec.~\ref{sec:gauging}.
\begin{proposition}[Unitary ZX diagrams]\label{prop:uni}
If the parity-check matrix $\BH$ is square and full-rank, then the corresponding ZX diagram represents a unitary map.
\end{proposition}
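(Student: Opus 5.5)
The plan is to run Algorithm~\ref{alg:process-extraction} on a square, full-rank $\BH\in\BF_2^{n\times n}$ and show that every ingredient it emits is unitary, so the whole diagram equals a product of unitaries. As an alternative route, one can compute the linear map of the diagram directly; I will do both, using the second as a check.

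\textbf{Circuit-extraction route.} In step~1, Gaussian elimination on $\BH$ uses only row additions $R_i\leftarrow R_i+R_j$. Each one is removed from the diagram by emitting a \texttt{CNOT} on the output side, as in \figref{fig:cnot_extraction}. Since $\BH$ is invertible over $\BF_2$, row reduction can be carried to a permutation matrix $P$, with no row becoming zero. In the language of the previous proposition, this means $k^\top=m-\operatorname{rank}(\BH)=0$. Likewise $\BH'^\top$ is already a permutation matrix, so step~2 produces no zero rows and $k=n-\operatorname{rank}(\BH^\top)=0$.

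Hence the residual diagram has no dangling legs, and therefore needs no ancillas and no projective measurements. What remains is a bipartite Z--X diagram whose biadjacency matrix is a permutation. Each Z-spider is joined by a single Hadamard edge to exactly one X-spider, and each spider has one input or output leg. Spider fusion, together with the rule that a phase-free arity-two spider is the identity, reduces each such pair to a single wire carrying a Hadamard. I will also track the Hadamard nodes on the boundary. The residual diagram is then a wire permutation composed with Hadamards. This is unitary, and composing with the emitted \texttt{CNOT} layers gives a unitary. I will also check that the scalar normalization works out (a global factor of $2^{\cdot}$ absorbed by convention). I expect this bookkeeping, and the careful justification that each row operation corresponds exactly to a \texttt{CNOT} extraction, to be the main obstacle.

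\textbf{Direct-computation route.} Up to a scalar, the diagram of \figref{fig:tanner}(b) acts as
\[
\sD\ket{s}\propto\sum_{t}(-1)^{t\cdot \BH s}\ket{t},
\]
which is $H^{\otimes n}$ composed with the invertible linear permutation $\ket{s}\mapsto\ket{\BH s}$. The Hadamard layer and the basis permutation are each unitary, so this already shows unitarity. It also confirms Eqs.~\eqref{eq:D1}--\eqref{eq:D2}, now with an invertible $\sD$.
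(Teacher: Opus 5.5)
Your circuit-extraction route is the paper's proof: a square, full-rank $\BH$ gives $k=k^\top=0$, so Algorithm~\ref{alg:process-extraction} emits no ancillas and no measurements, leaving only \texttt{CNOT}s and a permutation of wires with Hadamards; your bookkeeping of the residual diagram just spells out what the paper leaves implicit. Your direct computation $\sD\ket{s}\propto\sum_t(-1)^{t\cdot \BH s}\ket{t}$, i.e.\ $H^{\otimes n}$ applied after the basis permutation $\ket{s}\mapsto\ket{\BH s}$, is also correct (unitary up to the dropped scalar $2^{n/2}$), and it gives a self-contained check that does not depend on the extraction algorithm.
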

\noindent 
\noindent\textit{Proof.}
 If $\BH$ is square, then $n=m$. Full rank then implies $\ker(\BH) = \ker(\BH^\top) = 0$, so the circuit extraction algorithm introduces no ancilla qubits and no projective measurements. The resulting quantum process consists solely of unitaries.
\hfill$\square$

\subsection{Physical interpretation as gauging} \label{sec:gauging}

The generalized KW map $\sD$ is a non-invertible operator, but it can always be realized as a unitary through embedding into a larger Hilbert space. Concretely, $\sD$ admits a decomposition of the form
\begin{align}
    \sD = \sP_M \, \sU_{A,M} \, \sP_A ,
\end{align}
where $\sU_{A,M}$ is a unitary operator acting on an enlarged Hilbert space, while $\sP_A$ and $\sP_M$ are projection maps corresponding to ancilla initialization and projective measurement in specified bases. Such a decomposition is not unique. Different choices of ancilla and measurement spaces lead to distinct quantum processes realizing the same non-invertible map $\sD$, and these different realizations admit natural and complementary physical interpretations as gauging procedures.

The same map $\sD$ can thus be realized with different numbers of ancillas and measurements; we describe two canonical choices that correspond to the defect-condensation and minimal-coupling pictures of gauging.

\subsubsection{Defect Condensation}
Applying the Gaussian-elimination-based circuit extraction of \secref{sec:circuit-extraction} yields a realization of the form
\begin{align}
    \sD = \sP_\mu \, \sU_{\mu,\eta} \, \sP_\eta ,
\end{align}
in which $k^\top$ ancilla qubits are introduced and $k$ projective measurements are performed. The corresponding ZX-diagram is shown in
\figref{fig:UE1}. 

Operationally, adding $k^\top$ ancilla legs and $k$ measurement legs corresponds to appending $k^\top$ rows and $k$ columns to the original parity-check matrix $\mathbb{H}$, producing a square matrix $\mathbb{H}'$ satisfying
\begin{align}
    \ker(\mathbb{H}') = \ker(\mathbb{H}'^{\,T}) = 0 .
\end{align}
As shown in Proposition~\ref{prop:uni}, this guarantees that the resulting ZX diagram represents a fully unitary map $\sU$. This construction admits a natural interpretation as a lifting of the non-invertible map $\sD$ to a unitary transformation between operators in two enlarged Hilbert spaces of equal dimension $\sU:\CH_{\sigma,\eta} \to \CH_{\tau, \mu}$. As shown in \figref{fig:UE1}, the ancilla attachment extends the $2^n$-dimensional Hilbert space $\CH_\sigma$ of the original code $\CC$ to an enlarged space $\CH_{\sigma,\eta}$ of dimension $2^{n+k^\top}$, which matches the dimension of the enlarged output space $\CH_{\tau,\mu}$ of dimension $2^{m+k}$. 

Using Algorithm~\ref{alg:process-extraction}, the corresponding circuit can be written explicitly as
\begin{align}
    \sD = \sP_\mu  \prod_{\alpha=1}^{k}\prod_{i\in \lambda_\alpha} \texttt{CNOT}_{\alpha, i}\;\prod_{\beta=1}^{k^\top}\prod_{a\in R_\beta} \texttt{CNOT}_{\beta, a} \; \sP_\eta,
\end{align}
where $\{\lambda_\alpha\}_{\alpha=1}^{k}$ are the symmetries of $\CC$, and $\{R_\beta\}_{\beta=1}^{k^\top}$ are the redundancies of $\CC$, or equivalently, logical operators of the transpose code $\CC^\top$. Following the convention of \figref{fig:cnot_extraction}, the first subscript denotes the control and the second the target. This structure reflects the fact that the row operations required to eliminate rows of $\mathbb{H}$ correspond precisely to redundancies of the check constraints (i.e., logical operators of $\CC^\top$), while the row operations on $\mathbb{H}^\top$ correspond to logical operators of $\CC$ itself.
\begin{figure}[htbp]
    \centering
    \includegraphics[width=1\linewidth]{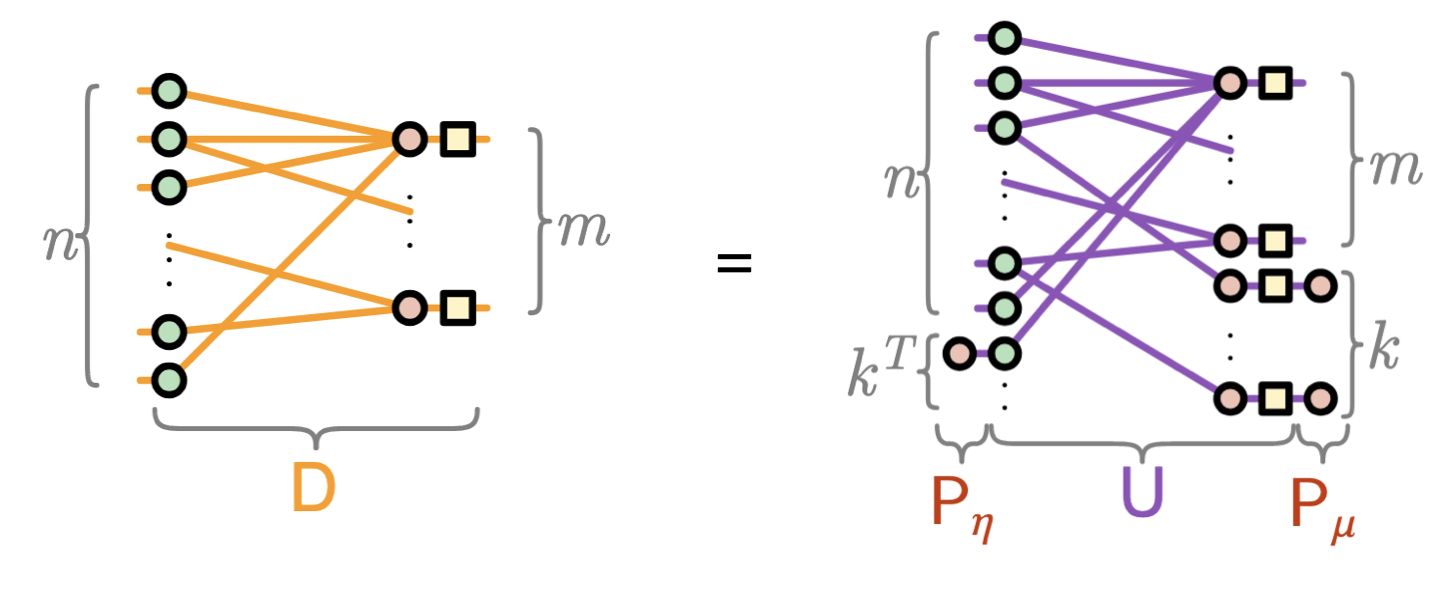}
    \caption{Defect condensation picture: ZX-diagram representation of the quantum process realization with $k^\top$ ancillae and $k$ measurements $\sD = \sP_\mu \, \sU_{\mu,\eta} \, \sP_\eta$.}
    \label{fig:UE1}
\end{figure}

Physically, this circuit realization corresponds to gauging via defect condensation \cite{gorantla_tensor_2024}. The ancilla qubits represent topologically distinct symmetry defects, while the measurements enforce projection onto symmetry-invariant space. 

\subsubsection{Minimal Coupling}
An alternative realization of $\sD$ is obtained by choosing a maximal embedding, leading to a decomposition
\begin{align}
    \sD = \sP_\gamma \, \sU'_{\gamma,\kappa} \, \sP_\kappa ,
\end{align}
in which $m$ ancilla qubits and $n$ projective measurements are introduced. The corresponding ZX-diagram is shown in \figref{fig:UE2}.

\begin{figure}[htbp]
    \centering
    \includegraphics[width=1\linewidth]{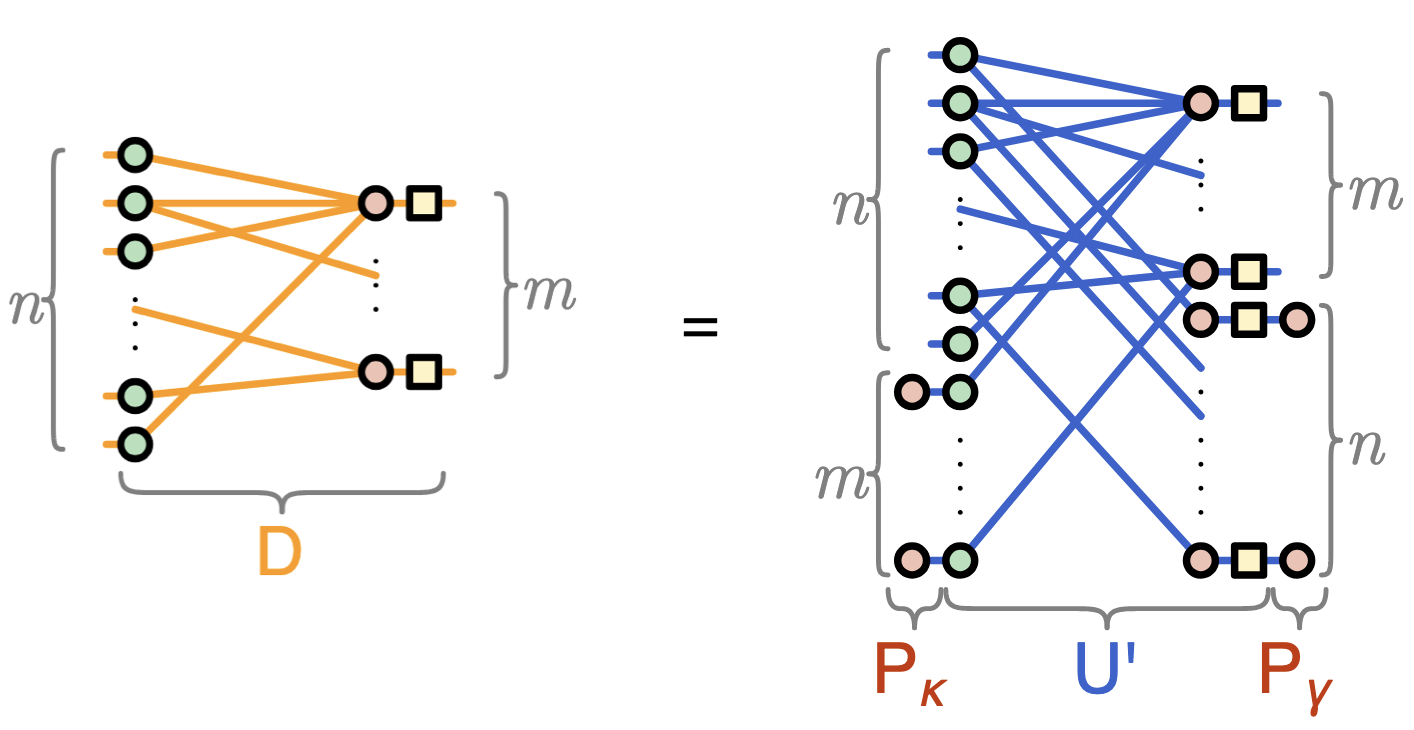}
    \caption{Minimal coupling picture: ZX-diagram representation of the quantum process realization with $m$ ancilla qubits and $n$ measurements $\mathsf{D} = \sP_\gamma \, \sU'_{\gamma,\kappa} \, \sP_\kappa$.}
    \label{fig:UE2}
\end{figure}

Applying Algorithm~\ref{alg:process-extraction} yields the circuit
\begin{align}
    \sD = \sP_\gamma \prod_{i=1}^{n}\prod_{a\in \delta^\top(i)} \texttt{CNOT}_{i, \tilde{a}} \;\prod_{a=1}^{m}\prod_{i\in \delta(a)} \texttt{CNOT}_{a, \tilde{i}}\;  \sP_\kappa
\end{align}
where tildes distinguish the ancilla qubits from the original code qubits. Each of the $m$ ancilla qubits $\tilde{a}$ corresponds to a check node, and each of the $n$ measurements corresponds to a bit node.

This realization corresponds to \emph{gauging via minimal coupling} \cite{barkeshli_symmetry_2019,harlow_symmetries_2019}. In this picture, the ancilla qubits represent dynamical local gauge fields, while the projective measurements impose Gauss law constraints that project onto the gauge-invariant
subspace.

A detailed derivation demonstrating the precise correspondence between these quantum processes and the two gauging procedures is provided in Appendix~\ref{app:gauge}.

\subsection{Phase interpretation} \label{sec:stateprep}
The KW duality map $\sD$ can also be understood as a transformation between operator algebras and, consequently, between quantum phases of matter. As shown in Eqs.~(\ref{eq:D1}) and~(\ref{eq:D2}), $\sD$ maps the operator algebra $\CA_\CC$ of the original code to the operator algebra $\CA_{\CC^\top}$ of the transpose code. Since the Hamiltonians $H_\CC$ and $H_{\CC^\top}$ are determined by their respective operator algebras via Eq.~(\ref{eq:cH}), the duality $\sD$ induces a mapping between the corresponding quantum phases. The quantum process constructed above therefore provides not only an implementation of the abstract operator-algebra map, but also a concrete and operational protocol for preparing quantum states \cite{tantivasadakarnHierarchyTopologicalOrder2023, tantivasadakarnShortestRouteNonAbelian2023} in nontrivial phases.

For cLDPC codes without local redundancies, $\sD$ maps the trivial product-state ground state of $H_\CC$ at $h \gg J$ to a ground state of $H_{\CC^\top}$ at $h \ll J$ exhibiting SSB. The ground-state manifold is in general degenerate, spanned by distinct symmetry-breaking states.

Importantly, the ancilla qubits introduced in our construction play a crucial role in selecting a particular ground state from this degenerate manifold. In the defect condensation realization, $k^\top$ ancilla qubits are initialized in the $\ket{+}^{\otimes k^\top}$ state, enforcing the condition that all global symmetries—equivalently, the logical operators of the dual code $\CC^\top$—have eigenvalue $+1$. This preparation selects the maximally symmetric ground state within the SSB phase. More generally, by adjusting the initialization basis and relative phases of the ancilla qubits, one can prepare other symmetry-broken ground states, thereby achieving explicit control over the symmetry sector of the resulting state. In fact, for any ground state $\ket{\psi}$ in the SSB manifold, there exists an ancilla initialization such that the quantum process maps a trivial product state to $\ket{\psi}$. An explicit example of this state-preparation procedure for the one-dimensional Ising model is presented in Appendix~\ref{app:stateprep}.

For cLDPC codes with local redundancies, the same process prepares states in a topological order phase. In this
case, the ancilla qubits select a topological sector within the degenerate ground-state manifold of the gauged code, analogous to their role in selecting a symmetry sector in the SSB case.

%For example, it takes the paramagnetic phase (trivial phase) of a 1d quantum Ising model to the ferromagnetic phase (spontaneously symmetry broken phase). When the input classical code has local redundancies, KW duality leads to a nontrivial phase of matter, like in the case of 2d Ising model KW duality maps it to $\mathbb{Z}_2$ gauge theory (toric code topological order).

% Explicit realization of the map in terms of ZX diagrams, and furthermore our algorithm to extract explicit circuit from it, gives a concrete method to prepare states from non-trivial phases of matter from trivial states.

\section{Product constructions and coupled layer processes \label{sec:products}}
In the previous section, we showed how the KW transformation of a single cLDPC code can be realized as an explicit quantum process and interpreted as a mapping between operator algebras and quantum phases. We now extend this framework to product constructions built from multiple cLDPC codes. At the operator-algebra level, these constructions take the operator algebras of the input codes, stack and couple them, and project onto a quotient algebra that generates the product code. From a physical perspective, this corresponds to a coupled-layer construction \cite{ma_fracton_2017,vijayIsotropicLayerConstruction2017,tanFractonModelsProduct2025}, in which stacks of simpler code Hamiltonians are entangled and projected to generate more complex check structures. We show that tensor products and check products can be represented and analyzed using the same ZX-diagrammatic language and quantum-process formalism, and that their physical content is most transparently understood through the operator-algebra framework of Sec.~\ref{sec:op-alg}.

Let $\mathcal{C}_1$ and $\mathcal{C}_2$ be two classical codes with parity check matrices $[\BH_{\mathcal{C}_1}]_{m_1\times n_1}$ and $[\BH_{\mathcal{C}_2}]_{m_2\times n_2}$, boundary maps $\delta^1$ and $\delta^2$, sets of checks $\{C^1_a | a \in A\}$ and $\{C^2_b | b \in B\}$, and embedding Hilbert spaces $\CH_\alpha$ and $\CH_\beta$. Following the operator-algebra framework of Sec.~\ref{sec:op-alg}, we associate to each code its canonical Hamiltonian built from the full operator algebra $\CA_{\CC_s}$. This includes both the check terms (with coupling $J$ set to 1) and the bit-node terms (with coupling $h_s$), yielding
\begin{align}
    H_1&=- \sum_{a \in A}C^1_a - h_1 \sum_i \alpha_i^x,\\
    H_2&=- \sum_{b \in B}C_b^2 - h_2\sum_j \beta_j^x,
\end{align}
where $C^1_a=\prod_{i \in \delta^1(a)}\alpha_i^z$, and $C^2_b=\prod_{j \in \delta^2(b)}\beta_j^z$. These Hamiltonians interpolate between the code-constrained phase at small $h_s$ and a trivial product-state phase at large $h_s$.
\subsection{Tensor product code}
We form parity-check matrices by vertical concatenation of blocks (row stacking), denoted with $\left[\begin{smallmatrix} \BH_{S_1} \\ \BH_{S_2} \end{smallmatrix}\right]$. The tensor product code $\mathcal{C}_1\otimes \mathcal{C}_2$ has parity check matrix
\begin{align*}
    \BH_{\mathcal{C}_1\otimes \mathcal{C}_2} = \begin{bmatrix}
        \mathbb{I}_{n_2}\otimes \BH_{\mathcal{C}_1} \\
        \BH_{\mathcal{C}_2}\otimes \mathbb{I}_{n_1}
    \end{bmatrix}_{(n_1m_2+n_2m_1)\times n_1n_2}.
\end{align*}
The codewords of the tensor product code in $\mathbb{F}_2^{n_1n_2}$ can be cast into $n_1\times n_2$ dimensional binary matrices. In this representation, a binary $n_1\times n_2$ matrix is a codeword of the tensor product code if and only if every row is a codeword of $\mathcal{C}_1$ and every column is a codeword of $\mathcal{C}_2$. 

Let $\CH_\sigma$ be the embedding Hilbert space for $\CC_1 \otimes \CC_2$. The tensor product code has two types of check operators: row checks $C^1_{a,j}$ inherited from $\CC_1$ acting on each row $j$, and column checks $C^2_{i,b}$ inherited from $\CC_2$ acting on each column $i$
\begin{align} \label{eq:tensorchecks}
    C^1_{a,j}=\prod_{i \in \delta^1(a)}\sigma_{ij}^z, \quad
    C^2_{i,b}=\prod_{j \in \delta^2(b)}\sigma_{ij}^z .
\end{align}
The logical operators of $\CC_1 \otimes \CC_2$ can be obtained by taking the tensor product of the logical operators of the two codes $\{L^{1}_\lambda|\lambda \in \Lambda\}, \{L^{2}_\nu | \nu \in N \}$
\begin{align}
    L_{\lambda, \nu} = \prod_{i\in \lambda, j \in \nu} \sigma^x_{ij}.
\end{align}
\subsubsection{ZX realization as quantum process}
The tensor product can be realized as an operator
\[
\tp: \CH_\alpha^{\otimes n_2} \otimes \CH_\beta^{\otimes n_1} \to \CH_\sigma
\]
that implements the following mapping
\begin{align}
    \alpha_{ij}^x \beta_{ij}^x &\;\mapsto\; \sigma_{ij}^x ,\\
    \prod_{i\in \delta^1(a)} \alpha_{ij}^z &\;\mapsto\; \prod_{i\in \delta^1(a)}\sigma_{ij}^z ,\\
    \prod_{j\in \delta^2(b)} \beta_{ij}^z &\;\mapsto\; \prod_{j\in \delta^2(b)}\sigma_{ij}^z ,
\end{align}
where $\alpha_{ij}$ denote qubits in the $j$th copy of code $\CC_1$ (stacked $n_2$ times), and $\beta_{ij}$ denote qubits in the $i$th copy of code $\CC_2$ (stacked $n_1$ times).
The ZX diagram and quantum process realization are shown in \figref{fig:tpzx}.
\begin{figure}[ht]
    \centering
    \includegraphics[width=\linewidth]{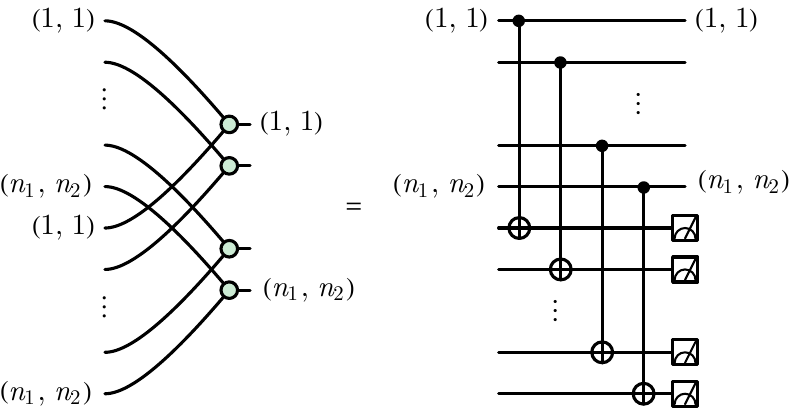}
    \caption{ZX diagram and quantum process realization of the tensor product construction.}
    \label{fig:tpzx}
\end{figure}

\subsubsection{Phase of matter}
The tensor product construction admits a natural interpretation as an operation on operator algebras, realized physically as a coupled-layer model. Starting from the operator algebras $\CA_{\CC_1}$ and $\CA_{\CC_2}$, we form $n_2$ copies of $\CA_{\CC_1}$ and $n_1$ copies of $\CA_{\CC_2}$, and define the stacked operator algebra $\CA_{\mathrm{stack}}$ as the algebra generated by the union of their generators, acting on the joint Hilbert space $\CH_\alpha^{\otimes n_2} \otimes \CH_\beta^{\otimes n_1}$. 

The canonical Hamiltonians associated with these stacked algebras are precisely the layer Hamiltonians: we stack $n_2$ copies of $H_1$, labeling the $j$th copy as
\begin{align}
 H_1^{(j)}=- \sum_{a \in A}C^1_{a,j} - h_1\sum_i \alpha_{ij}^x.   
\end{align}
Similarly, we stack $n_1$ copies of $H_2$, labeling the 
$i$th copy as 
\begin{align}
H_2^{(i)}=- \sum_{b \in B}C^2_{i,b} - h_2\sum_j \beta_{ij}^x. 
\end{align}
Then, we introduce a coupling Hamiltonian
\begin{align}
    H_{\mathrm{cpl}}=-\lambda\sum_{i,j}\alpha_{ij}^z\beta_{ij}^z,
\end{align}
and couple together the stacks to get
\begin{align}
    H=\sum_j H_1^{(j)}+\sum_{i}H_2^{(i)}+H_\mathrm{cpl}.
\end{align}
In the strong-coupling limit $\lambda \gg 1,h$, the low-energy subspace is characterized by
the local constraint $\alpha_{ij}^z = \beta_{ij}^z$.
We define: 
\begin{align}
    \sigma_{ij}^z &\coloneq \alpha_{ij}^z=\beta_{ij}^z,\\
    \sigma_{ij}^x &\coloneq \alpha_{ij}^x\beta_{ij}^x.
\end{align}
The quantum process $\tp$ thus acts as a \emph{quotient map on operator algebras}: it identifies $\alpha_{ij}^z = \beta_{ij}^z \eqqcolon \sigma_{ij}^z$ and maps $\alpha_{ij}^x \beta_{ij}^x \mapsto \sigma_{ij}^x$, projecting $\CA_{\mathrm{stack}}$ onto the operator algebra $\CA_{\CC_1 \otimes \CC_2}$ of the tensor product code. The projection is precisely the strong-coupling limit of $H_{\mathrm{cpl}}$, and the resulting operator algebra generates the tensor product code Hamiltonian. 

At second order in perturbation theory, the effective Hamiltonian reduces to
\begin{equation}\label{eq:tpeff}
\begin{split}
    H_{\mathrm{eff}} \simeq - \sum_{a \in A, j}C^1_{a,j} - \sum_{i, b \in B}C^2_{i,b} - \frac{h_1 h_2}{\lambda} \sum_{i,j} \sigma_{ij}^x.
\end{split}
\end{equation}
This effective Hamiltonian reproduces exactly the row and column checks of the tensor product code $\mathcal{C}_1 \otimes \mathcal{C}_2$ in Eq.~(\ref{eq:tensorchecks}), confirming that the coupled-layer construction yields the correct operator algebra $\CA_{\CC_1 \otimes \CC_2}$. The transverse-field term emerges at second order with effective coupling $h_{\text{eff}}=h_1h_2/\lambda$. If $\mathcal{C}_1$ and $\mathcal{C}_2$ correspond to symmetry-broken phases, the coupled system remains symmetry-broken but acquires local redundancies $\prod_{j \in \delta^2(b)} C^1_{a,j}\prod_{i \in \delta^1(a)} C^2_{i,b}$ for all pairs of $(a,b)$. Thus, subsequent application of the KW gauging procedure yields topological order. Detailed perturbative derivations and explicit examples are given in \appref{app:perttensor}. 
\subsection{Check product code}
In contrast to the tensor product, where row and column checks act independently, the check product couples the two codes so that each check involves bits from both. The check product code $\mathcal{C}_1\star \mathcal{C}_2$ has parity check matrix 
\begin{align}
    \BH_{\mathcal{C}_1\star \mathcal{C}_2} = \begin{bmatrix}
        \BH_{\mathcal{C}_1} \otimes \BH_{\mathcal{C}_2}
    \end{bmatrix}_{m_1m_2\times n_1n_2}.
\end{align}
Let $\CH_\gamma$ be the embedding Hilbert space for $\CC_1 \star \CC_2$. The check operators of the check product code are given as
\begin{align} \label{eq:checkchecks}
    C_{a,b}=\prod_{i \in \delta^1(a), j \in \delta^2(b)} \gamma^z_{ij}.
\end{align}
for each pair $(a,b) \in A \times B$.
The check product code has two types of logical operators: row logicals $L_{\lambda,j}^1$ inherited from $\CC_1$ acting on each row $j$, and column logicals $L_{i,\nu}^2$ inherited from $\CC_2$ acting on each column $i$, where 
\begin{align}
    L_{\lambda,j}^1 = \prod_{i\in \lambda} \gamma^x_{ij}, \quad L_{i,\nu}^2 = \prod_{j\in \nu} \gamma^x_{ij}.
\end{align}
% i.e. we place $L_p^{(1)}$ logical operator at $i^{th}$ row, and similarly for $L^{(2)}_{q,j}$ we place $L^{(2)}_q$ logical operator in the $j^{th}$ column.
\subsubsection{ZX realization as quantum process}
The check product can be realized as an operator
\[
\chp : \CH_\alpha^{\otimes n_2} \otimes \CH_\beta^{\otimes n_1} \to \CH_\gamma
\]
that implements the following mapping
\begin{align}
    \alpha_{ij}^x &\mapsto \gamma_{ij}^x ,\\
    \beta_{ij}^x  &\mapsto \gamma_{ij}^x ,\\
    \alpha_{ij}^z \beta_{ij}^z &\mapsto \gamma_{ij}^z .
\end{align}
The ZX diagram and quantum process realization are shown in \figref{fig:cpzx}.

\begin{figure}[ht]
    \centering
    \includegraphics[width=\linewidth]{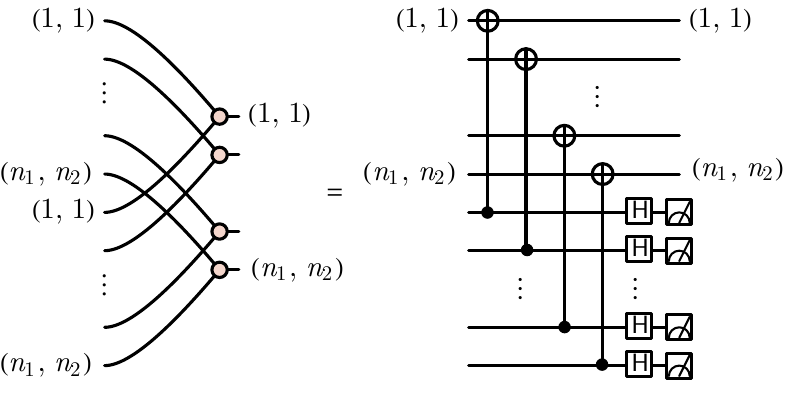}
    \caption{ZX diagram and quantum process realization of the check product construction.}
    \label{fig:cpzx}
\end{figure}

Note that the ZX diagram for check product is a dual-colored version of the tensor-product diagram. Thus, it is evident from the ZX diagram that $\chp = H^{\otimes n} \tp H^{\otimes n} = G \tp G$, which is consistent with the identity
\begin{align*}
    \CC_1 \star \CC_2 = (\CC_1^\perp \otimes \CC_2^\perp)^\perp .
\end{align*}
\subsubsection{Phase of matter}
The check product admits an analogous operator-algebra interpretation, dual to the tensor product case. We again form the stacked operator algebra $\CA_{\mathrm{stack}}$, but now introduce an $X$-type interlayer coupling
\begin{align}
    \tilde{H}_{\mathrm{cpl}}=-\lambda \sum_{i,j}\alpha_{ij}^x \beta_{ij}^x,
\end{align}
Coupling together the stacks yields the total Hamiltonian
\begin{align}
    \tilde{H}=\sum_j H_1^{(j)}+\sum_{i}H_2^{(i)}+\tilde{H}_\mathrm{cpl}.
\end{align}
In the strong-coupling limit $\lambda \gg 1$, the low-energy subspace is characterized
by the constraint $\alpha_{ij}^x = \beta_{ij}^x$.
Then we can define
\begin{align}
    \gamma_{ij}^x &\coloneq \alpha_{ij}^x=\beta_{ij}^x,\\
    \gamma_{ij}^z &\coloneq \alpha_{ij}^z\beta_{ij}^z.
\end{align}
The quantum process $\chp$ again acts as a quotient map on operator algebras, but with the roles of $X$ and $Z$ exchanged relative to the tensor product: it identifies $\alpha_{ij}^x = \beta_{ij}^x \eqqcolon \gamma_{ij}^x$ and maps $\alpha_{ij}^z \beta_{ij}^z \mapsto \gamma_{ij}^z$, projecting $\CA_{\mathrm{stack}}$ onto $\CA_{\CC_1 \star \CC_2}$. This duality between the two product constructions is manifest in the ZX-diagram representation, where the check product diagram is obtained from the tensor product diagram by exchanging Z- and X-spiders.

Unlike the tensor product case, the leading nontrivial contributions to the effective Hamiltonian arise only at higher order in perturbation theory.
At order $|\delta^1(a)| + |\delta^2(b)|$, the effective Hamiltonian takes the form
\begin{equation}\label{eq:cpeff}
\begin{split}
    \tilde{H}_{\mathrm{eff}}=&-(h_1+h_2)\sum_{ij}\gamma_{ij}^x \\& + \frac{\kappa}{(-\lambda)^{|\delta^1(a)|+|\delta^2(b)|}} \sum_{a \in A, b \in B} C_{a,b},
\end{split}
\end{equation}
where $\kappa$ is a constant.

This effective Hamiltonian reproduces exactly the checks of the check product code $\mathcal{C}_1 \star \mathcal{C}_2$ in Eq.~(\ref{eq:checkchecks}), confirming that the coupled-layer construction yields the correct operator algebra $\CA_{\CC_1 \star \CC_2}$. The transverse-field terms contribute at first order with the effective coupling $h_{\text{eff}}=h_1 + h_2$, while the checks arise only at higher order — in contrast to the tensor product case where the roles are reversed. Physically, the check product realizes a subsystem symmetry or fracton-like phase, consistent with earlier coupled-layer constructions of fracton order \cite{ma_fracton_2017, vijayIsotropicLayerConstruction2017,tanFractonModelsProduct2025}. Explicit perturbative derivations and concrete examples are provided in \appref{app:pertcheck}.

\section{Generalization to PQ-product \label{sec:pq-prod}}
We generalize the tensor product and check product to a general product which we call the $(p,q)$-product. Familiar cases include the tensor product $(p,q)=(2,1)$, the check product $(2,2)$, and the cubic product $(3,2)$; the general definition is given below. 
Let $\{\CC_s\}_{s=1}^p$ be $p$ classical codes. Let $[p]=\{1, ..., p\}$. The $(p,q)$-product code can be constructed in two steps:
\begin{enumerate}
    \item For every $q$-element subset $S \subset [p]$, construct the check product \[
    \CC_S=\bigstar_{s \in S} \CC_s
    \] 
    \item Take the tensor product over all such subsets:\[
    \CC_{(p,q)}=\bigotimes_{S \subset [p], |S|=q}\CC_S
    \]
\end{enumerate}

To define the $(p,q)$-product more concretely in terms of the parity matrix, let $\CC_s$ have parity check matrices $\BH_{\CC_s} \in \BF_2^{m_s \times n_s}$
, boundary maps $\delta^{(s)}$, and sets of checks $A_s=\{C_{a_s}^{(s)}\}$, embedding Hilbert spaces $\CH_{\alpha^{(s)}}$, and associated Hamiltonians 
\begin{align}
    H_{\alpha^{(s)}} = -\sum_{a_s \in A_s} C_{a_s}^{(s)}-h_s\sum_{i_s}\alpha_{i_s}^{(s)x},
\end{align}
where $C_{a_s}^{(s)}=\prod_{i_s \in \delta^{(s)}(a_s)}\alpha_{i_s}^{(s)z}$. As before, these are the canonical Hamiltonians associated with the operator algebras $\CA_{\CC_s}$ via Eq.~(\ref{eq:cH}).

Let $r = \binom{p}{q}$. Denote all $q$-element subsets of $[p]$ by 
\[
\binom{[p]}{q} \;=\; \{S\subseteq [p] : |S|=q\}.
\]
Define, for each $S\in \binom{[p]}{q}$, a block matrix
\begin{equation*}
\BH_S \;:=\; \bigotimes_{s=1}^p M_s(S),
\qquad
M_s(S) \;=\;
\begin{cases}
\BH_{\CC_s},& s\in S.\\
\mathbb{I}_{n_s},& s\notin S.
\end{cases}
\label{eq:HS_block_def}
\end{equation*}
Then, the ${(p,q)}$-product has the parity check matrix formed by vertical concatenation of these blocks:
\begin{equation*}
\BH_{(p,q)} \;:=\; \begin{bmatrix} \BH_{S_1} \\ \BH_{S_2} \\ \vdots \\ \BH_{S_r} \end{bmatrix},
\label{eq:Hpq_def}
\end{equation*}
where $\{S_1,\dots,S_r\}=\binom{[p]}{q}$ in any fixed order.

This is a generalization of familiar constructions:
\begin{itemize}
    \item $(p,q)=(2,1)$ gives the tensor product code
    \[
    \BH_{(2,1)} = \begin{bmatrix}
        \mathbb{I}_{n_2}\otimes \BH_{\mathcal{C}_1} \\
        \BH_{\mathcal{C}_2}\otimes \mathbb{I}_{n_1}
    \end{bmatrix}
    \]
    \item $(p,q)=(2,2)$ gives the check product code
    \[
    \BH_{(2,2)} = \begin{bmatrix}
        \BH_{\mathcal{C}_1} \otimes \BH_{\mathcal{C}_2}
    \end{bmatrix}
    \]
    \item $(p,q)=(3,2)$ gives the cubic product code
    \[\BH_{(3,2)} =
    \begin{bmatrix}
        \mathbb{I}_{n_1} \otimes \BH_{\mathcal{C}_2} \otimes \BH_{\mathcal{C}_3} \\
        \BH_{\mathcal{C}_1} \otimes \mathbb{I}_{n_2}\otimes \BH_{\mathcal{C}_3} \\
        \BH_{\mathcal{C}_1}\otimes \BH_{\mathcal{C}_2}\otimes \mathbb{I}_{n_3}
    \end{bmatrix}
    \]
\end{itemize}
The ZX-diagram realization and Hamiltonian interpretation of the $(p,q)$-product follow directly from its definition as a composition of check and tensor products. The coupled-layer interpretation and operator-algebra quotient structure of the tensor and check products extend naturally to the general $(p,q)$-product. We illustrate this explicitly for the cubic $(p,q)=(3,2)$ case in Appendix~\ref{app:pq}.

\section{Discussion and outlook \label{sec:discussion}} 

In this paper, we studied several fundamental LDPC code transformations, including generalized Kramers-Wannier duality, tensor product, check product, and their generalization to the $(p,q)$-product through a unified operational framework. Central to our approach is the observation that the physical content of a cLDPC code is captured by the operator algebra associated with its Tanner graph, and that code transformations are most naturally understood as maps between operator algebras, from which Hamiltonian mappings and phase structures follow as consequences. For each transformation, we provided three complementary realizations: as a ZX diagram, as an explicit quantum circuit consisting of ancilla initialization, local unitaries, and projective measurements, and as a mapping between quantum phases of matter.

For KW duality, the ZX diagram arises directly from the Tanner graph of the code. Using this structure, we provided a systematic algorithm to extract quantum circuits and identified fundamental lower bounds on ancilla and measurement overhead in terms of the code's symmetries and redundancies. Different circuit realizations correspond naturally to distinct physical pictures of gauging — minimal coupling and defect condensation — which clarify how KW duality maps trivial phases to SSB or topologically ordered phases depending on the presence of local redundancies. These results provide explicit and scalable protocols for preparing states in nontrivial quantum phases.

For tensor and check product constructions, the ZX diagrams and circuits naturally realize coupled-layer constructions familiar from many-body physics. At the operator-algebra level, these constructions act as quotient maps on stacked operator algebras, projecting onto the product code's algebra via interlayer coupling. We made this correspondence precise by deriving the effective product-code Hamiltonians using perturbation theory in the strong-coupling limit, which serves as a consistency check that the correct operator algebra is reproduced.

More broadly, viewing code transformations as physical quantum processes provides a common language for understanding logical operations as phase-changing mechanisms. From this perspective, gauging and coupled-layer constructions emerge as elementary building blocks that interpolate between distinct quantum phases in a controlled and operationally meaningful way. This connection highlights a deep relationship between quantum error correction and quantum phase transitions \cite{dennisTopologicalQuantumMemory2002}, and suggests new avenues for state preparation \cite{ tantivasadakarnHierarchyTopologicalOrder2023,tantivasadakarnShortestRouteNonAbelian2023,bravyiLiebRobinsonBoundsGeneration2006}, logical encoding \cite{williamson_low-overhead_2024}, and resource generation.

This framework opens several directions for future investigation. A natural next step is to extend the operator-algebra framework to qLDPC codes, where the interplay between X- and Z-type checks gives rise to a richer algebraic structure. Such an extension could provide new insights into the phase structure and logical operations of qLDPC codes. Our framework can also be extended to other code constructions, such as balanced product codes \cite{breuckmann_balanced_2021} and related homological products \cite{bravyiHomologicalProductCodes2013,tillich_quantum_2014,hastingsFiberBundleCodes2021}, potentially yielding new families of quantum phases. On the practical side, an important question concerns the robustness of the proposed circuits under experimental imperfections in ancilla initialization, unitaries, and measurements, and how error correction can be incorporated into these processes. Finally, the flexibility of the ZX-calculus suggests systematic optimization strategies — trading off locality, ancilla overhead, and measurement cost — by exploring families of equivalent ZX diagrams, an approach that may be particularly valuable for near-term implementations \cite{kissingerReducingNumberNonClifford2020}.

{\it Note.}
Upon finishing our work we became aware of a related paper \cite{zhangCoupledLayerConstructionQuantum2026} on coupled layer construction of product codes.

\begin{acknowledgments}
We acknowledge the inspiring discussions with Tibor Rakovszky and Vedika Khemani. The research is supported by the NSF Grant No. DMR-2238360.
\end{acknowledgments}

\bibliographystyle{apsrev4-1} 
\bibliography{references.bib}

\onecolumngrid
\newpage
\appendix
\section{Quantum process realization of \texorpdfstring{$\sD$}{D} for 1D Ising model} \label{app:I1D}
For simplicity, we ignore overall phases and scalar coefficients in the ZX-calculus derivation below. These factors do not affect the structure of the quantum process or the mapping between states, and can be restored systematically if needed.
\subsection{Quantum Process Extraction from ZX-Diagram}\label{app:I1D-extract}
Consider a cLDPC code $\CI_{1D}$ based on the Ising model defined on a 1D lattice of three sites. The Tanner graph and ZX-diagram representation of $\sD$ are shown below:
\begin{figure}[ht]
    \centering
    \includegraphics[width=0.6\linewidth]{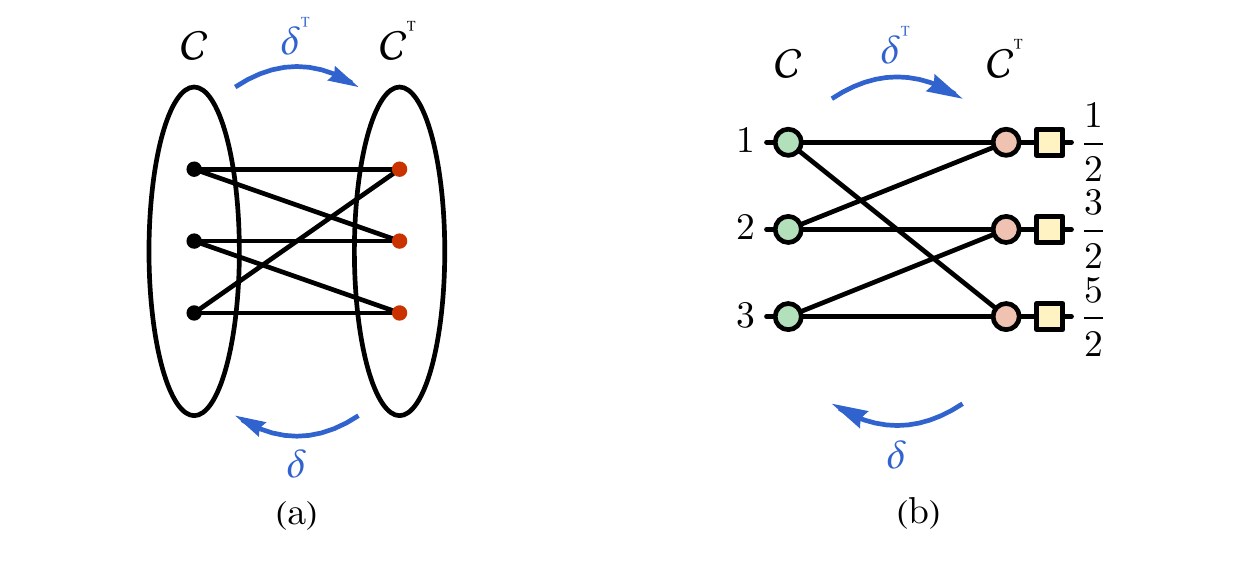}
    \caption{(a) Tanner graph representation of $\CI_{1D}$.
(b) ZX-diagram representation of $\sD$.}
    \label{fig:I1DTannerZX}
\end{figure}

The generalized KW transformation in \eqnref{eq:D1} and \eqnref{eq:D2} can be explicitly demonstrated on ZX-diagram using ZX-calculus rules: 
\begin{figure}[ht]
    \centering
    \includegraphics[width=0.95\linewidth]{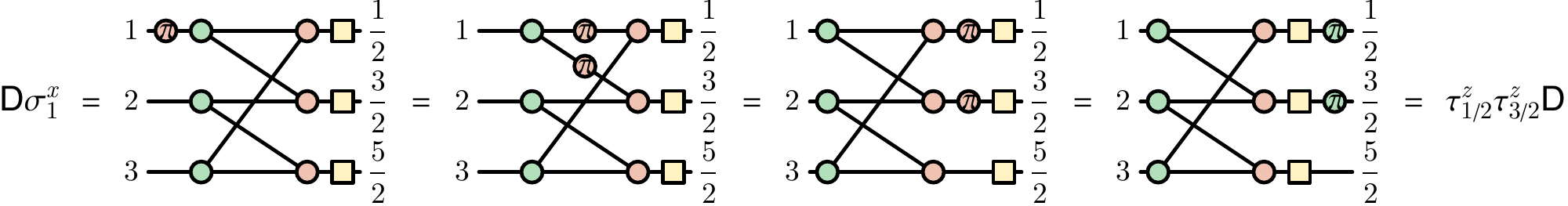}\\
    \includegraphics[width=0.95\linewidth]{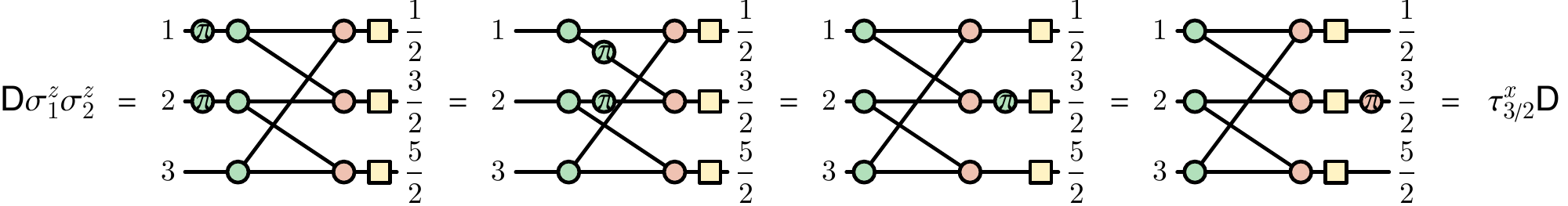}
    \label{fig:I1DKW}
\end{figure}

To extract a quantum circuit from the ZX-diagram of $\sD$ in \figref{fig:I1DTannerZX}(b), we follow the steps of Algorithm \ref{alg:process-extraction}:
\begin{enumerate}
    \item Perform Gaussian elimination on $\BH=\begin{pmatrix}
    1 & 0 & 1 \\
    1 & 1 & 0 \\
    0 & 1 & 1 
\end{pmatrix}$.
    \begin{figure}[ht]
        \centering
        \includegraphics[width=0.6\linewidth]{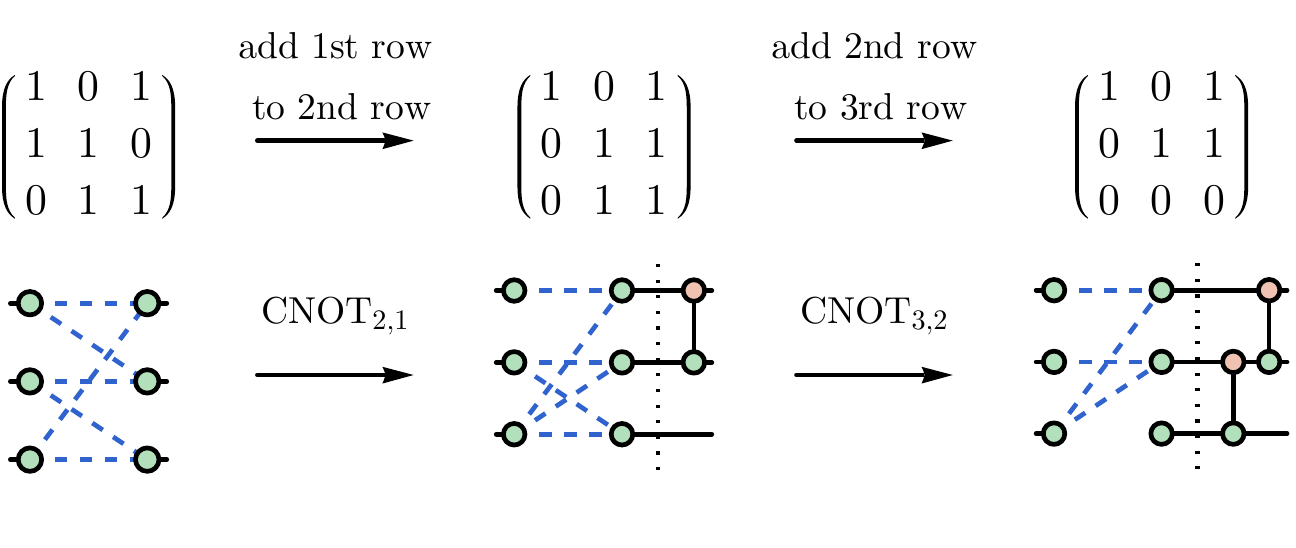}
    \end{figure}
    \item Perform Gaussian elimination on $\begin{pmatrix}
    1 & 0 & 1 \\
    0 & 1 & 1 \\
    0 & 0 & 0 \end{pmatrix}^\top=\begin{pmatrix}
    1 & 0 & 0 \\
    0 & 1 & 0 \\
    1 & 1 & 0 \end{pmatrix}$.
    \begin{figure}
        \centering
        \includegraphics[width=0.7\linewidth]{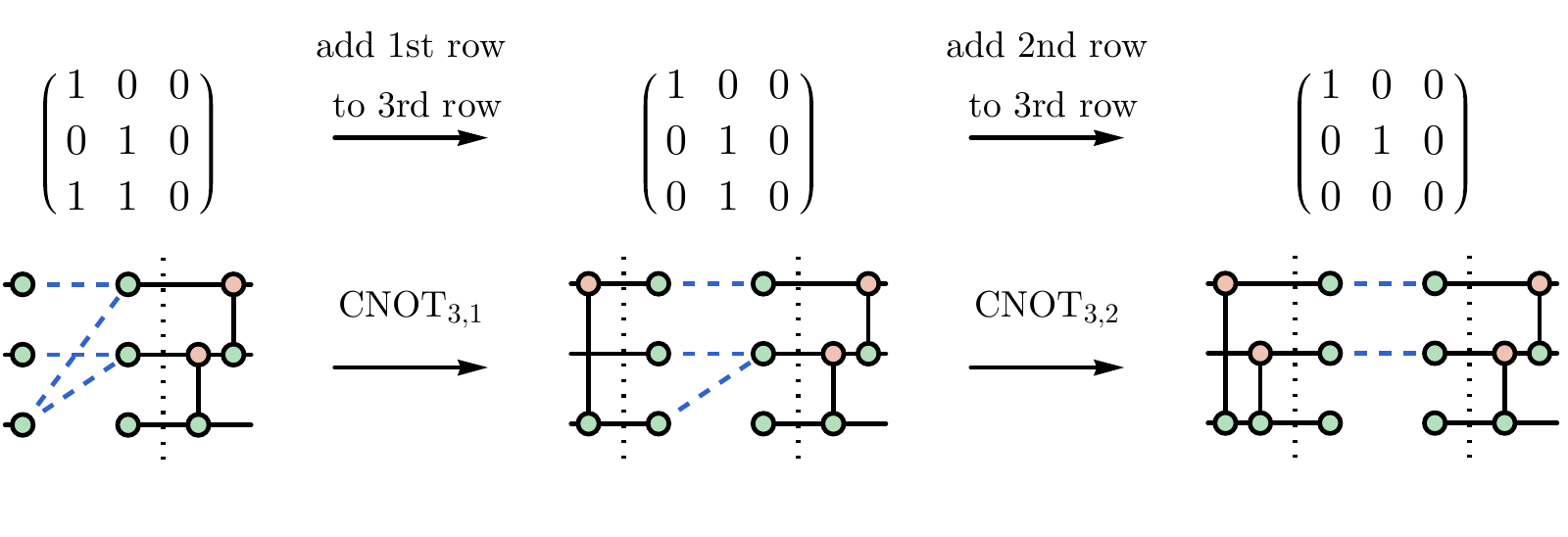}
    \end{figure}
\end{enumerate}

The extracted circuit is shown below. Besides the \texttt{CNOT} and Hadamard gates, it involves 1 ancilla and 1 measurement, which is the number of redundancies and symmetries for $\CI_{1D}$, respectively.  
\begin{figure}[ht]
    \centering
    \includegraphics[width=0.6\linewidth]{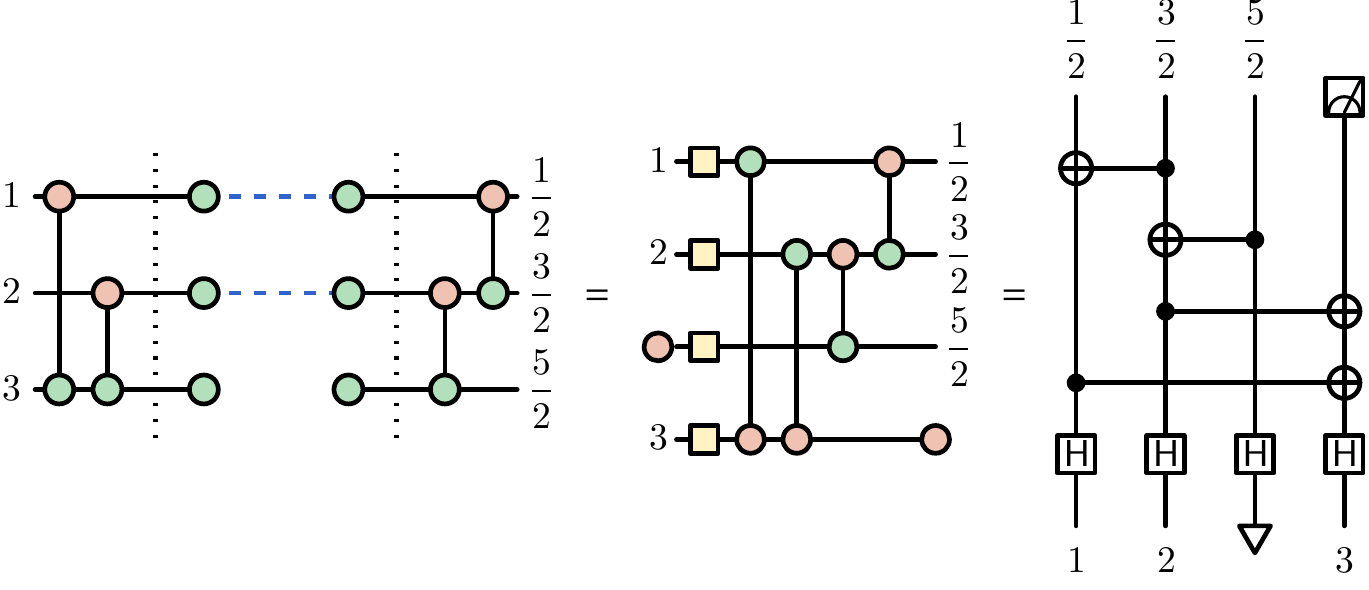}
    \caption{Circuit realizing $\sD$ for the 1D Ising example, extracted from the ZX diagram via Algorithm~\ref{alg:process-extraction}.}
    \label{fig:I1D-extracted-circuit}
\end{figure}

\subsection{Minimal Coupling vs Defect Condensation} \label{app:I1D-mc-vs-dc}
The above realization of $\sD$ corresponds to the gauging via defect condensation picture. On the other hand, the gauging via minimal coupling picture gives a circuit involving 3 ancilla and 3 measurements.
\begin{figure}[h]
    \centering
    \includegraphics[width=0.47\linewidth]{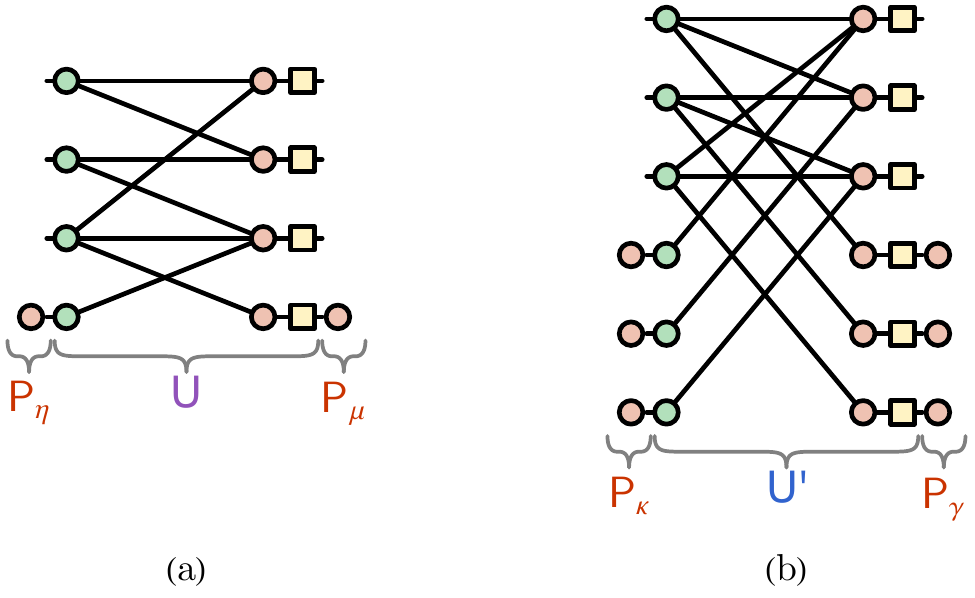}
    \caption{(a) Defect condensation: $k^\top$ ancilla qubits and $k$ measurements.
(b) Minimal coupling: $m$ ancilla qubits and $n$ measurements.}
    \label{fig:I1D-dc-vs-mc}
\end{figure}

The extracted minimal coupling circuit is shown below. 
\begin{figure}[h]
    \centering
    \includegraphics[width=0.2\linewidth]{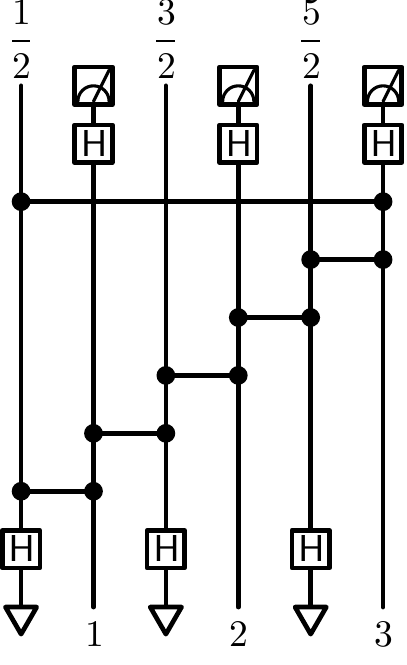}
    %\caption{Caption}
    %\label{fig:enter-label}
\end{figure}
\subsection{State Preparation in SSB Phase} \label{app:stateprep}

In \secref{sec:stateprep}, we showed that the circuit extracted for $\sD$
can be used to prepare SSB states starting from a trivial product state. By adjusting the basis and relative phase of the ancilla qubits, the
construction provides explicit control over which ground state within the SSB manifold is prepared. Here, we illustrate this mechanism using the defect
condensation realization of $\sD$ for the three-qubit $\CI_{1D}$.

Starting from the product state $\ket{+++}$, different choices of the ancilla
state $\ket{s}$ allow us to prepare any ground state in the SSB manifold spanned
by $\ket{000}$ and $\ket{111}$. We denote by $\sD(\ket{s})$ the defect
condensation circuit realization of $\sD$ with ancilla initialized in the state
$\ket{s}$. It suffices to verify that
\[
\sD(\ket{+})\ket{+++} = \ket{000}, \qquad
\sD(\ket{-})\ket{+++} = \ket{111},
\]
as we verify using ZX-calculus below.
\begin{figure}[h]
    \centering
    \includegraphics[width=0.48\linewidth]{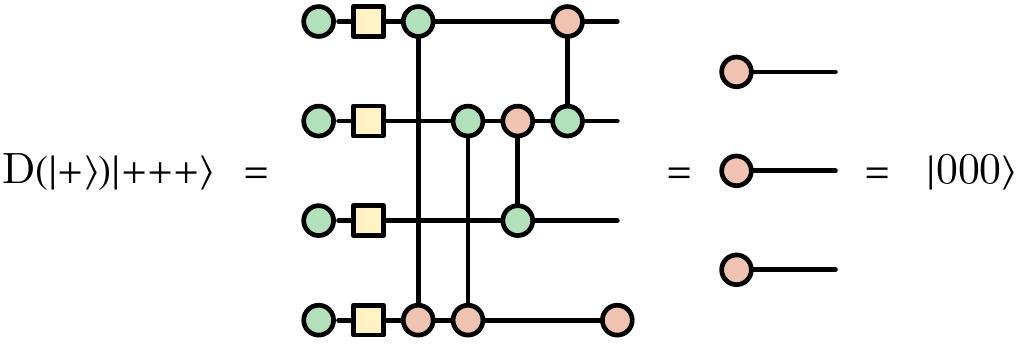} \quad
    \includegraphics[width=0.48\linewidth]{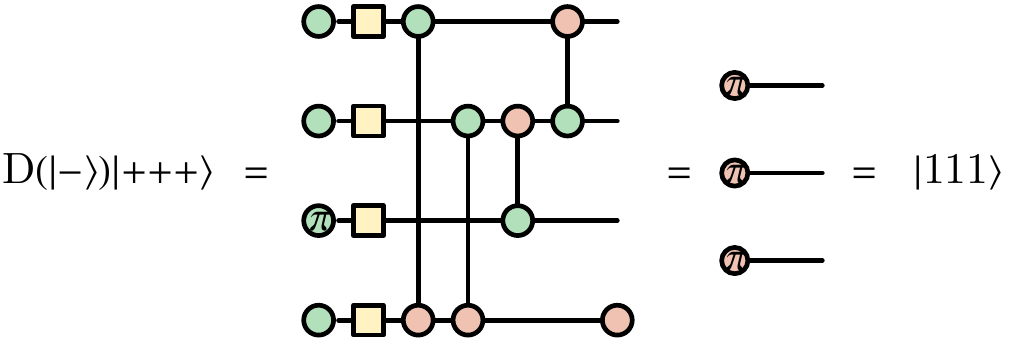}
    \label{fig:stateprep}
\end{figure}
By linearity, choosing
$\ket{s} = \alpha \ket{+} + \beta \ket{-}$ prepares the state
$\alpha\ket{000} + \beta\ket{111}$ for arbitrary coefficients
$\alpha,\beta$ satisfying $|\alpha|^2 + |\beta|^2 = 1$.

\section{Gauging via minimal coupling vs Gauging via defect condensation}\label{app:gauge}

Consider a Hamiltonian defined by a cLDPC code $\CC$ with $n$ bits and $m$ checks:
\begin{align}
    H=-J\sum_{a=1}^m \prod_{i \in \delta(a)} \sigma_i^z-h\sum_{i=1}^n\sigma_i^x
    \label{eq:sigmaH}
\end{align}

In the gauging via minimal coupling picture \cite{barkeshli_symmetry_2019,harlow_symmetries_2019}, one gauge field $\kappa_a^z$ is coupled to each check: 
\begin{align}
    H_{gauge}=-J\sum_{a=1}^m \kappa_a^z\prod_{i \in \delta(a)} \sigma_i^z-h\sum_{i=1}^n\sigma_i^x
\end{align}
The introduction of $m$ gauge fields $\kappa_a^z$ is equivalent to attaching $m$ ancilla qubits.

Next we impose $n$ Gauss laws
\begin{align}
    \gamma_i=\sigma_i^x\prod_{a\in \delta^\top(i)} \kappa_a^x=1,
\end{align}
which imposes a restriction of $\CH_{\sigma,\kappa}$ to a gauge invariant physical subspace $\CH_{\sigma,\kappa}^{sym}$. The enforcement of $n$ Gauss laws is equivalent to conducting $n$ projective measurements. 

We can rewrite $H_{gauge}$ in terms of new variables that are manifestly gauge-invariant:
\begin{align}
    H_{\mathrm{gauge}}=-J\sum_a \tau_a^x-h\sum_i \prod_{a \in \delta^\top(i)} \tau_a^z \label{eq:tauH}
\end{align}
where
\begin{align}
    \tau_a^x = \kappa_a^z \prod_{i \in \delta(a)} \sigma_i^z, \quad \tau_a^z = \kappa_a^x.
\end{align}
This redefinition of variables can be achieved exactly by a unitary circuit $\mathsf{U}'$.

In this way, gauging via minimal coupling implements the KW duality transformation, mapping \eqnref{eq:sigmaH} to \eqnref{eq:tauH}. This transformation can be captured explicitly by the operator $\sD=\sP_{\gamma} \sU'\sP_\kappa$, where 
\begin{itemize}
    \item $\sP_\kappa$ represents the attachment of $m$ ancilla qubits required for minimal coupling,
    \item $\sU'$ is the unitary circuit that achieves the transformation to gauge-invariant variables, and 
    \item $\sP_\gamma$ enforces $n$ Gauss law constraints through projective measurements.
\end{itemize}
Thus, each step in the minimal coupling gauging procedure has a direct operational interpretation within the resource-intensive quantum process that realizes $\sD$. This makes the connection between the duality map and the gauging procedure fully explicit.

\begin{figure}[ht]
    \centering
    \includegraphics[width=0.5\linewidth]{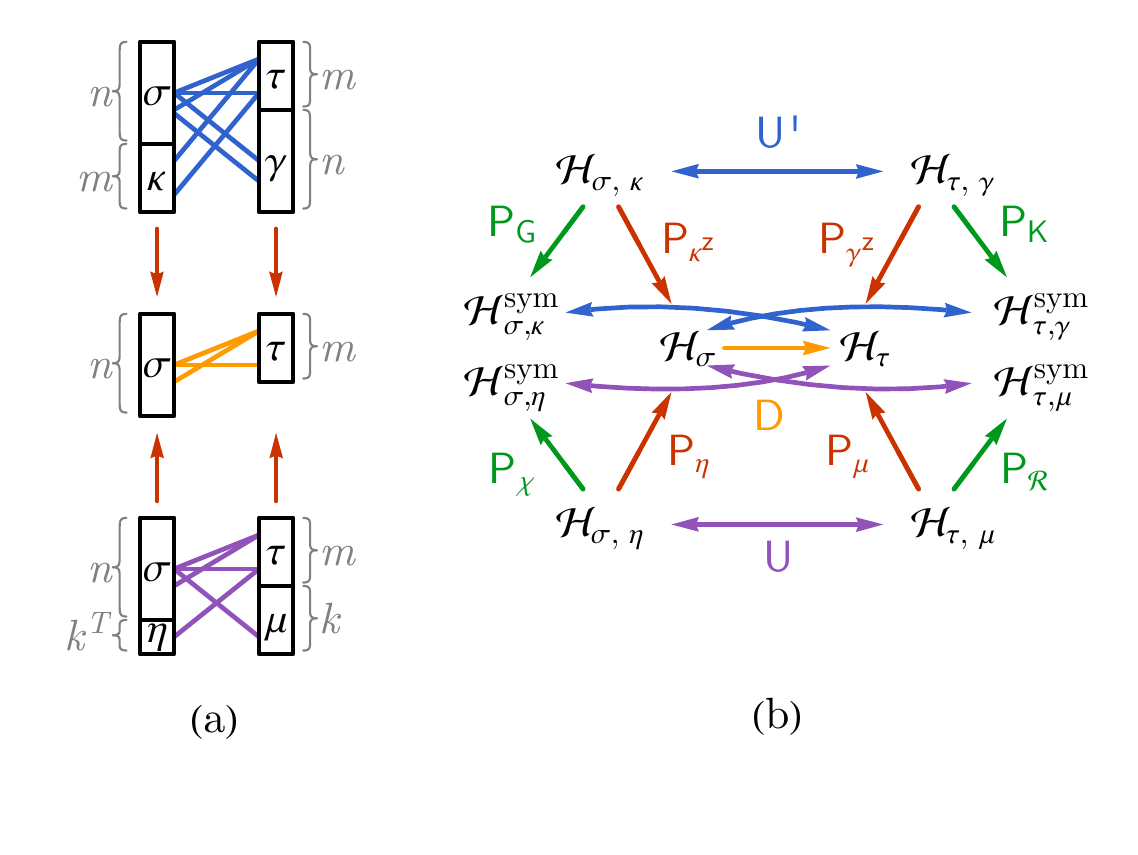}
    \caption{Two quantum process realizations of the KW duality map $\sD$. 
(a)~ZX-diagram representations. Middle: the non-invertible map $\sD : \CH_\sigma \to \CH_\tau$. Top: minimal coupling realization, embedding $\sD$ into a unitary $\sU'$ acting on an enlarged Hilbert space with $m$ ancilla qubits ($\kappa$) and $n$ projective measurements ($\gamma$). Bottom: defect condensation realization, embedding $\sD$ into a unitary $\sU$ with $k^\top$ ancilla qubits ($\eta$) and $k$ projective measurements ($\mu$). 
(b)~Commutative diagram of Hilbert spaces and maps. The non-invertible map $\sD$ (center, yellow) factors through either the minimal coupling unitary $\sU'$ (top, blue) or the defect condensation unitary $\sU$ (bottom, purple), related by ancilla projections and measurement projections on each side.}
    \label{fig:mc-vs-dc}
\end{figure}

Another way to perform gauging is through defect condensation \cite{roumpedakis_higher_2023,gorantla_tensor_2024}, where we insert and sum over topologically distinct symmetry defects — a procedure often referred to as orbifolding in 2D. 

Starting with the Hamiltonian in \eqnref{eq:sigmaH}, we introduce $k^\top$ defect variables $\eta_\beta$, one for each redundancy $R_\beta \in \ker(\delta)$, which correspond to topologically distinct symmetry defects. To couple the defects to the Hamiltonian, we first fix a basis $\{R_\beta\}_{\beta=1}^{k^\top}$ of the redundancy space $\ker(\delta)$, and construct a dual basis $\{G_\beta\}_{\beta=1}^{k^\top}$ of check subsets satisfying \cite{rakovszky_physics_2024}
\begin{align}
    \langle G_\beta | R_{\beta'} \rangle = \delta_{\beta\beta'},
\end{align}
where $\langle G | R \rangle = \sum_{a \in G} |R \cap \delta(a)| \mod 2$ counts the parity of overlap. Each $G_\beta$ is a set of checks that overlaps with $R_\beta$ an odd number of times and with all other redundancies an even number of times. In many cases of interest, the dual basis can be chosen such that the sets $\{G_\beta\}$ are mutually disjoint, i.e., each check $a$ belongs to at most one $G_\beta$. With this choice, the defect-coupled Hamiltonian is obtained by modifying the check operators $C_a \to \hat{C}_a$, where
\begin{align}
    \hat{C}_a = 
    \begin{cases}
        \eta_\beta \, C_a, & \text{if } a \in G_\beta, \\
        C_a, & \text{if } a \notin \bigcup_\beta G_\beta.
    \end{cases}
\end{align}
Each defect variable $\eta_\beta$ couples to exactly the checks in $G_\beta$, and each check is modified by at most one defect variable. The resulting Hamiltonian is
\begin{align}
    H_{\mathrm{gauge}} = -J \sum_{a} \hat{C}_a - h \sum_i \sigma_i^x.
\end{align}
By construction, the modified checks satisfy $\prod_{a \in R_\beta} \hat{C}_a = \eta_\beta$ for each redundancy, so that the defect variables $\eta_\beta$ label distinct symmetry sectors. The insertion of $k^\top$ defect variables corresponds to attaching $k^\top$ ancilla qubits.

Then, we project onto gauge invariant subspace $\CH_{\sigma,\eta}^{sym}$ by enforcing that all $k$ logical operators defined by the symmetries $\{\lambda_\alpha\}_{\alpha=1}^{k}$ of $\CC$ have eigenvalue $+1$. For each $\alpha=1,…,k$,

\begin{align}
   \mu_{\lambda_\alpha}=\prod_{i \in \lambda_\alpha}\sigma_i^x=1.
\end{align}
This enforcement of $k$ constraints corresponds to performing $k$ projective measurements. 

We can rewrite $H_{\mathrm{gauge}}$ in the form of \eqnref{eq:tauH} through a transformation to new gauge-invariant variables. Using the modified checks $\hat{C}_a$ defined above, we identify
\begin{align}
    \tau_a^x &=
    \hat{C}_a = 
    \begin{cases}
        \eta_\beta \, C_a, & \text{if } a \in G_\beta, \\
        C_a, & \text{if } a \notin \bigcup_\beta G_\beta.
    \end{cases}, \quad \tau_a^z = \left(\prod_{\beta:\, a \in R_\beta} \eta_\beta\right) \prod_{i \in \Delta_a} \sigma_i^x,
\end{align}
where $\Delta_a$ is determined by the condition that $\tau_a^z$ commutes with all $\hat{C}_{a'}$ and anticommutes only with $\hat{C}_a$. This condition can be solved explicitly given the dual basis $\{G_\beta\}$ and the code structure. The redefinition of variables is achieved by the unitary circuit $\sU$.

Therefore, the gauging process via defect condensation is captured by the quantum process $\sD = \sP_\mu \, \sU \, \sP_\eta$, involving the attachment of $k^\top$ ancilla qubits, a unitary circuit $\sU$, and $k$ projective measurements.
% We can rewrite $H_{gauge}$ in the form of \eqnref{eq:tauH} through a transformation to new gauge-invariant variables:
% \begin{align}
%     \tau_a^x &=\prod_{\alpha \in \delta(a)}\eta_\alpha \prod_{i \in \delta(a)}\sigma_i^z\\
%     \tau_a^z &=\prod_{\alpha \in \CR_a} \eta_\alpha \prod_{i \in A : \delta^\top(A)=\CB_a} \sigma_i^x
% \end{align}
% where $\CR_a=\{\alpha \,|\, a \in R_\alpha\}$, $\CB_a=\bigominus_{\alpha \in \CR_a} \cup_{b \in R_\alpha}b$. This redefinition of variables is achieved by the unitary circuit $U$. 
\section{Perturbative Derivation of Effective Hamiltonians for Product Codes} \label{app:pertforproduct}
In the main text, we present effective Hamiltonians that include both commuting $Z$-type code constraints and transverse-field $X$ terms. The $Z$-type terms alone suffice to define the classical constraint structure
associated with a cLDPC code. However, when discussing \emph{quantum phases of matter} and \emph{state preparation}, it is standard to include transverse fields, which select a trivial product-state phase at weak coupling and enable adiabatic interpolation into nontrivial phases.

Because these transverse-field terms do not commute with the strong interlayer coupling constraints, their effects are naturally treated
using degenerate perturbation theory in the strong-coupling limit. In this appendix, we provide a detailed perturbative derivation of the effective Hamiltonians quoted in the main text, including both $Z$- and $X$-type terms. We begin by reviewing the general perturbative framework \cite{bravyiSchriefferWolffTransformationQuantum2011} in Sec.~\ref{app:pertgeneral}, then derive the general effective Hamiltonians for the tensor product in Sec.~\ref{app:perttensor} and check product in Sec.~\ref{app:pertcheck}, and finally work out explicit examples for the 1D Ising model in Sec.~\ref{app:pertexamples}.

\subsection{General Perturbative Framework}\label{app:pertgeneral}
In both the tensor-product and check-product constructions, the total Hamiltonian
can be written as
\[
H = H_0 + V,
\]
where $H_0 := -\lambda \sum_{i,j} \Pi_{ij}$ is the strong interlayer coupling term, and 
\[
V = H_Z + H_X,
\]
with
\begin{align}
H_Z &= -\sum_{a,j} C^1_{a,j} - \sum_{i,b} C^2_{i,b},
\\
H_X &= -h_1 \sum_{i,j} \alpha_{ij}^x - h_2 \sum_{i,j} \beta_{ij}^x.
\end{align}
Let $P$ be the projector onto the ground space of $H_0$, and $Q=\openone-P$.

In the strong-coupling limit
$\lambda \gg \{1,h_1,h_2\}$,
the low energy effective Hamiltonian acting on $P\CH$ admits the standard
degenerate perturbation expansion
\begin{align}
H_{\mathrm{eff}}
=
P H_0 P
+ P V P
+ P V \frac{Q}{E_0-H_0} V P
+ \cdots ,
\label{eq:Heff_general}
\end{align}
where $E_0$ is the ground state energy of $H_0$.
We now apply this framework to the two product constructions.
\subsection{Tensor Product} \label{app:perttensor}
For the tensor product construction, the interlayer constraint is
\begin{align}
\Pi_{ij} := \alpha_{ij}^z \beta_{ij}^z.
\end{align}
The ground space of $H_0=-\lambda \sum_{i,j} \alpha_{ij}^z \beta_{ij}^z$ is characterized by the local constraint
$\alpha_{ij}^z = \beta_{ij}^z$ for all $(i,j)$.

Within this constrained subspace, we define emergent qubits
\begin{align}
\sigma_{ij}^z := P \alpha_{ij}^z P = P \beta_{ij}^z P,
\qquad
\sigma_{ij}^x := P(\alpha_{ij}^x \beta_{ij}^x)P.
\end{align}
Single-spin flips $\alpha_{ij}^x$ or $\beta_{ij}^x$ violate the constraint
and therefore satisfy $P\alpha_{ij}^xP=P\beta_{ij}^xP=0$.
However, the paired operator $\alpha_{ij}^x\beta_{ij}^x$ preserves the
constraint and acts nontrivially within $P\CH$.

The $Z$-check terms commute with $H_0$ and project directly:
\begin{align}
P H_Z P
=
-\sum_{a,j} \prod_{i\in \delta^1(a)} \sigma_{ij}^z
-\sum_{i,b} \prod_{j\in \delta^2(b)} \sigma_{ij}^z.
\end{align}
Because $P H_X P = 0$, the first nontrivial contribution from $H_X$ appears at
second order:
\begin{align}
H^{(2)}_{\rm eff}
=
P H_X \frac{Q}{E_0 - H_0} H_X P .
\end{align}
Acting on a constrained state, $\alpha_{ij}^x$ (or $\beta_{ij}^x$) creates a
single violated constraint with energy cost $\sim 2\lambda$.
A second flip on the same site with the opposite species restores the
constraint, yielding $\alpha_{ij}^x \beta_{ij}^x$.
Up to constants, this produces
\begin{align}
H^{(2)}_{\rm eff}
\;\sim\;
-\frac{h_1 h_2}{\lambda}
\sum_{i,j} \sigma_{ij}^x
\;+\; O(\lambda^{-2}) .
\label{eq:app_tensor_Heff2}
\end{align}

Collecting terms, the effective Hamiltonian in the constrained subspace is
\begin{align}
H_{\rm eff}
=
-\sum_{a,j}
\prod_{i\in \delta^1(a)} \sigma_{ij}^z
-\sum_{i,b}
\prod_{j\in \delta^2(b)} \sigma_{ij}^z
-\frac{h_1 h_2}{\lambda}
\sum_{i,j} \sigma_{ij}^x
\;+\; O(\lambda^{-2}) ,
\label{eq:app_tensor_final}
\end{align}
which matches \eqnref{eq:tpeff}.

\subsection{Check Product} \label{app:pertcheck}
For the check-product construction, the interlayer constraint is
\begin{align}
\Pi_{ij} := \alpha_{ij}^x \beta_{ij}^x,
\end{align}
The constrained ground state subspace of $H_0 = -\lambda \sum_{i,j} \alpha_{ij}^x \beta_{ij}^x$  is characterized by the constraint $\alpha_{ij}^x=\beta_{ij}^x$.

We define emergent variables
\begin{align}
\gamma_{ij}^x := P\alpha_{ij}^xP = P\beta_{ij}^xP,
\qquad
\gamma_{ij}^z := P(\alpha_{ij}^z \beta_{ij}^z)P.
\end{align}
Now $P H_X P$ contributes at first order because $\alpha_{ij}^x$ and $\beta_{ij}^x$ act identically
inside $P\CH$:
\begin{align}
P H_X P = -(h_1+h_2)\sum_{i,j}\gamma_{ij}^x.
\label{eq:app_check_PHXP}
\end{align}
In contrast, the $Z$-check terms in $H_Z$ generally do not preserve the constraint subspace at low order:
each factor $\alpha_{ij}^z$ or $\beta_{ij}^z$ flips $\alpha_{ij}^x$ or $\beta_{ij}^x$ and creates
constraint violations. Consequently, generating an effective term built from $\gamma^z$ requires a sequence
of virtual processes that returns to the constrained subspace.

Consider a pair of checks $a\in A$ and $b\in B$.
The operator
$\prod_{i\in\delta^1(a)} \alpha_{ij}^z$
flips the $XX$ constraints on all sites in
$\delta^1(a)\times\{j\}$,
and similarly
$\prod_{j\in\delta^2(b)} \beta_{ij}^z$
flips constraints on
$\{i\}\times\delta^2(b)$.
To return to the constrained subspace, one must apply a collection of such $Z$-operators so that every violated constraint is repaired.
The lowest nontrivial process that produces the paired operator
\[
\prod_{(i,j)\in \delta^1(a)\times\delta^2(b)}
\alpha_{ij}^z \beta_{ij}^z
\]
appears at perturbative order
\[
r = \lvert \delta^1(a) \rvert + \lvert \delta^2(b) \rvert .
\]
At that order, one obtains an effective interaction proportional to the
check-product stabilizer:
\begin{align}
H_{\rm eff}
=
-(h_1+h_2)\sum_{i,j}\gamma_{ij}^x
-
\sum_{a\in A}\sum_{b\in B}
\frac{
\kappa \,
}{
\lambda^{\lvert \delta^1(a) \rvert + \lvert \delta^2(b) \rvert}
}
\prod_{i\in\delta^1(a)}
\prod_{j\in\delta^2(b)}
\gamma_{ij}^z
\;+\; O(\lambda^{-(\lvert \delta^1(a) \rvert + \lvert \delta^2(b) \rvert+1)}) ,
\label{eq:app_check_final}
\end{align}
where $\kappa$ is a constant. This matches \eqnref{eq:cpeff}.
The emergent $Z$-terms are precisely the $Z$-checks of the check-product code
$\CC_1 \star \CC_2$, while $\sum_{i,j} \gamma_{ij}^x$ provides the
transverse-field drive.

\subsection{Examples}\label{app:pertexamples}
In this subsection we work out two canonical examples in detail:
(i) tensor product of two 1D Ising models, which yields a 2D Ising model, and
(ii) check product of two 1D Ising chains, which yields a 2D plaquette Ising model.
These examples illustrate explicitly how the perturbative mechanism described above reproduces the expected effective Hamiltonians and phases.
\subsubsection{Tensor product example: \texorpdfstring{$I_{1\mathrm{D}}\otimes I_{1\mathrm{D}} \rightarrow I_{2\mathrm{D}}$}{I1D tensor I1D to I2D}}
We take both $\CC_1$ and $\CC_2$ to be the one-dimensional Ising code, with Hamiltonians
\begin{align}
H_A^{(j)} &= - \sum_i \alpha_{ij}^z \alpha_{i+1,j}^z - h_1 \sum_i \alpha_{ij}^x, \\
H_B^{(i)} &= - \sum_j \beta_{ij}^z \beta_{i,j+1}^z - h_2 \sum_j \beta_{ij}^x.
\end{align}
We impose the strong $Z$-type coupling
\begin{align}
H_0 = -\lambda \sum_{i,j} \alpha_{ij}^z \beta_{ij}^z,
\qquad
\lambda \gg \{1,h_1,h_2\}.
\end{align}

Let $P$ denote the projector onto the ground space of $H_0$, characterized by
$\alpha_{ij}^z\beta_{ij}^z=+1$ for all $(i,j)$.
Within this constrained subspace we define emergent spins
\begin{align}
\sigma_{ij}^z := P\alpha_{ij}^z P = P\beta_{ij}^z P,
\qquad
\sigma_{ij}^x := P(\alpha_{ij}^x\beta_{ij}^x)P.
\end{align}

Projecting the $Z$-interactions gives
\begin{align}
P H_Z P
=
-\sum_{i,j} \sigma_{ij}^z \sigma_{i+1,j}^z
-\sum_{i,j} \sigma_{ij}^z \sigma_{i,j+1}^z,
\end{align}
which are precisely the horizontal and vertical Ising couplings on a square lattice.

Single-spin flips $\alpha_{ij}^x$ or $\beta_{ij}^x$ violate the $ZZ$ constraint and are
projected out at first order.
At second order, paired flips on the same site restore the constraint,
\begin{align}
H^{(2)}_{\rm eff}
\sim
-\frac{h_1 h_2}{\lambda}\sum_{i,j} \sigma_{ij}^x
\quad + \quad \text{const.}
\end{align}

Collecting terms,
\begin{align}
H_{\rm eff}
=
-\sum_{i,j} \sigma_{ij}^z \sigma_{i+1,j}^z
- \sum_{i,j} \sigma_{ij}^z \sigma_{i,j+1}^z
-\frac{h_1 h_2}{\lambda}\sum_{i,j} \sigma_{ij}^x
+ O(\lambda^{-2}),
\end{align}
which is the transverse-field Ising model in two dimensions. Applying the KW transformation discussed in the main text then maps this
to a $\mathbb{Z}_2$ topologically ordered phase.

\subsubsection{Check product example: \texorpdfstring{$I_{1\mathrm{D}}\star I_{1\mathrm{D}}\rightarrow PI_{2\mathrm{D}}$}{I1D star I1D to PI2D}}
We again start from two stacks of 1D Ising chains,
but now impose an $X$-type coupling,
\begin{align}
H_0 = -\lambda \sum_{i,j} \alpha_{ij}^x \beta_{ij}^x,
\qquad
\lambda \gg \{1, h_1,h_2\}.
\end{align}
The constrained subspace is characterized by $\alpha_{ij}^x=\beta_{ij}^x$.
We define emergent degrees of freedom
\begin{align}
\gamma_{ij}^x := P\alpha_{ij}^x P = P\beta_{ij}^x P,
\qquad
\gamma_{ij}^z := P(\alpha_{ij}^z\beta_{ij}^z)P.
\end{align}
Unlike the tensor product case, the transverse fields contribute at first order:
\begin{align}
P H_X P = -(h_1+h_2)\sum_{i,j} \gamma_{ij}^x.
\end{align}
The $Z$-interactions violate the $XX$ constraint and only re-enter the constrained
subspace through higher-order virtual processes.
For the 1D Ising checks, the lowest nontrivial contribution appears at fourth order and
produces plaquette interactions,
\begin{align}
H^{(4)}_{\rm eff}
\sim
-\frac{3}{128\,\lambda^4}
\sum_{i,j}
\gamma_{ij}^z \gamma_{i+1,j}^z \gamma_{i,j+1}^z \gamma_{i+1,j+1}^z
\quad + \quad \text{const.}
\end{align}
The resulting low-energy theory is
\begin{align}
H_{\rm eff}
=
-(h_1+h_2)\sum_{i,j} \gamma_{ij}^x
-\kappa \sum_{i,j}
\gamma_{ij}^z \gamma_{i+1,j}^z \gamma_{i,j+1}^z \gamma_{i+1,j+1}^z
+ O(\lambda^{-5}),
\end{align}
with $\kappa \sim 1/\lambda^4$.
This is the 2D plaquette Ising model in a transverse field. Thus, the check product construction produces subsystem-type constraints and, in appropriate
parameter regimes, exhibits fracton-like order.

\section{Cubic product as a \texorpdfstring{$(p=3,q=2)$}{(p=3,q=2)}-product}\label{app:pq}
Given cLDPC codes $\CC_1$, $\CC_2$, $\CC_3$, the cubic product is a $(p,q)=(3,2)$ instance of the general $(p,q)$-product construction. It can be constructed in two steps:
\begin{enumerate}
    \item Take the check product of every 2-element subset of $[3]=\{1,2,3\}$
    \begin{align*}
        \CC_{\{1,2\}}=\CC_1 \star \CC_2, \quad \CC_{\{1,3\}}=\CC_1 \star \CC_3, \quad \CC_{\{2,3\}}=\CC_2 \star \CC_3
    \end{align*}
    \item Take the tensor product of these three check-product codes
    \begin{align*}
        \CC_{(3,2)}=\CC_{\{1,2\}} \otimes \CC_{\{1,3\}} \otimes \CC_{\{2,3\}}
    \end{align*}
\end{enumerate}
Let $\CC_1$, $\CC_2$, $\CC_3$ have boundary maps $\delta^1$, $\delta^2$, $\delta^3$, sets of checks $A=\{C_a^1\}$, $B=\{C_b^2\}$, $C=\{C_c^3\}$, embedding Hilbert spaces $\CH_\alpha$, $\CH_\beta$, $\CH_\gamma$ and associated Hamiltonians
\begin{align}
    H_1&=- \sum_{a \in A}C^1_a - h_1 \sum_i \alpha_i^x,\\
    H_2&=- \sum_{b \in B}C_b^2 - h_2\sum_j \beta_j^x,\\
     H_3&=- \sum_{c \in C}C_c^3 - h_3\sum_k \epsilon_k^x,
\end{align}
where $C^1_a=\prod_{i \in \delta^1(a)}\alpha_i^z$, $C^2_b=\prod_{j \in \delta^2(b)}\beta_j^z$, and $C^3_c=\prod_{k \in \delta^3(c)}\epsilon_k^z$. 
The cubic product can be realized as an operator
\[
\mathsf{Cub}: \CH_\alpha^{\otimes n_2n_3} \otimes \CH_\beta^{\otimes n_1n_3} \otimes \CH_\epsilon^{\otimes n_1n_2} \to \CH_\sigma,
\]
The cubic product admits a physical realization via a coupled-layer construction that combines
the building blocks of the tensor and check products. We stack $n_2n_3$ copies of $\CH_\alpha$, $n_1n_3$ copies of $\CH_\beta$, $n_1n_2$ copies of $\CH_\epsilon$. Then, we introduce a coupling Hamiltonian
\begin{align}
    H_{\mathrm{cpl}}=-\lambda\sum_{i,j,k}(\alpha_{ijk}^x\beta_{ijk}^x+\alpha_{ijk}^x\epsilon_{ijk}^x+\beta_{ijk}^x\epsilon_{ijk}^x).
\end{align}
Adding this coupling term to the stacked Hamiltonians and taking the strong-coupling limit $\lambda\gg 1, h_1,h_2,h_3$, the effective Hamiltonian for the resulting cubic product code up to $\max(|\delta^1(a)|+|\delta^2(b)|,|\delta^1(a)|+|\delta^3(c)|,|\delta^2(b)|+|\delta^3(c)|)$-th order perturbation is
\begin{align}
    H_{\text{eff}}\simeq&- \frac{\kappa_1}{(-\lambda)^{|\delta^1(a)|+|\delta^2(b)|}} \sum_{k}\sum_{a \in A, b \in B} \prod_{i \in \delta^1(a),j \in \delta^2(b)}\sigma_{ijk}^z- \frac{\kappa_2}{(-\lambda)^{|\delta^1(a)|+|\delta^3(c)|}} \sum_{j}\sum_{a \in A, c \in C} \prod_{i \in \delta^1(a),k \in \delta^3(c)}\sigma_{ijk}^z\\&- \frac{\kappa_3}{(-\lambda)^{|\delta^2(b)|+|\delta^3(c)|}} \sum_{i}\sum_{b \in B, c \in C} \prod_{j \in \delta^2(b),k \in \delta^3(c)}\sigma_{ijk}^z-(h_1+h_2+h_3)\sum_{ijk}\sigma_{ijk}^x.
\end{align}
As a concrete example, taking $\CC_1=\CC_2=\CC_3=\CI_{1D}$ yields an effective Hamiltonian
equivalent to the 3D plaquette Ising model. This model exhibits planar subsystem symmetries and, in appropriate parameter regimes, realizes a fracton-like phase, which demonstrates how the $(p,q)$-product systematically generates higher-order constraint structures.

%\bibliographystyle{apsrev4-2}
%\bibliography{references}

\end{document}